\documentclass[12pt]{article}
\usepackage{graphicx}
\usepackage{amsmath}
\usepackage{amsthm}
\usepackage{amssymb}
\usepackage{mathtools}
\usepackage{dsfont}
\usepackage{geometry}
\usepackage{fontspec}
\usepackage{pgfplots}
\usepackage{xcolor}
\pgfplotsset{compat=1.18}
\usetikzlibrary{arrows.meta}
\usepackage[backend=biber,style=authoryear,maxnames=3,giveninits=false,uniquename=false]{biblatex}
\usepackage[onehalfspacing]{setspace}

\newtheorem{theorem}{Theorem}
\newtheorem{lemma}{Lemma}
\newtheorem{proposition}{Proposition}
\newtheorem{corollary}{Corollary}
\newtheorem{assumption}{Assumption}
\theoremstyle{definition}
\newtheorem{example}{Example}
\theoremstyle{plain}
\makeatletter
\newcommand{\indentafterthm}{\@afterindenttrue\@afterheading}
\AfterEndEnvironment{theorem}{\indentafterthm}
\AfterEndEnvironment{lemma}{\indentafterthm}
\AfterEndEnvironment{proposition}{\indentafterthm}
\AfterEndEnvironment{corollary}{\indentafterthm}
\makeatother

\pgfplotsset{
  theoryaxis/.style={
    axis lines=middle,
    axis line style={-Stealth, line width=0.8pt},
    tick style={black},
    tick align=outside,
    xlabel style={font=\small},
    ylabel style={font=\small},
    ticklabel style={font=\small,/pgf/number format/fixed},
    xmin=0, xmax=1, ymin=0, ymax=1,
    width=8cm, height=6cm,
    legend cell align=left
  }
}

\title{Prior-Free Delegation\footnote{We would especially like to thank Sarah Auster, Andreas Kleiner, Daniel Krähmer, and Benny Moldovanu for helpful discussions and comments. Terrence McGovern would like to thank the German Research Foundation (DFG) for support through CRC TR-224 EPoS (Project B02) and Dirk Bergemann for hosting him at Yale, where this work was partially completed. Wenjun Zheng gratefully acknowledges funding from the German Research Foundation (DFG) through Germany's Excellence Strategy (EXC 2126/1-390838866).}}
\author{
  Terrence McGovern\footnote{University of Bonn, Email: mcgovern@uni-bonn.de}
  \and
  Jan Benedikt Napp\footnote{University of Bonn, Email: jbnapp@uni-bonn.de}
  \and Wenjun Zheng\footnote{University of Bonn, Email: wenjun@uni-bonn.de}
}
\date{\today}
\begin{document}
\maketitle
\begin{abstract}
We consider a robust delegation problem in which the principal does not know the distribution from which the underlying state is drawn. The principal can choose a general randomized mechanism and maximizes her worst-case expected payoff over all state distributions. Our main result characterizes the robustly optimal mechanism. The mechanism has up to three regions: (i) accommodation, where the agent’s ideal action is taken, (ii) calibrated randomization, where each type receives a distinct lottery, and (iii) pooling, where all types receive the same lottery. We then extend our model to a multidimensional setting in which preferences are additively separable and satisfy a cross-dimensional symmetry condition, and show that delegating separately across dimensions is robustly optimal.
\end{abstract}
\section{Introduction}
\label{sec:introduction}
A principal may benefit from delegating a decision to a better-informed agent whose objectives differ from her own. A firm's owner may delegate investment decisions to a manager who is better informed about local demand. A policymaker may leave regulatory decisions to an agency with more detailed knowledge of the sector. In the canonical delegation model, the principal knows how the underlying state is distributed and allows the agent to choose from a subset of the available actions. In many such settings, however, the principal has little experience with the environment, such as when a firm enters a new market or a policymaker has to solve a novel regulatory problem. The principal may therefore face considerable uncertainty about the circumstances the agent will encounter and may be unable or unwilling to specify a prior over states. \par

We consider a delegation problem in which an agent privately observes a state and a principal controls an action. Both players have quadratic preferences around ideal actions that increase with the state. We assume that the agent's bias, defined as the difference between the agent's ideal action and the principal's, is monotone and that the principal knows the agent's preferences. We depart from the canonical delegation model in two ways. First, the principal knows neither the state nor its distribution. Second, we allow the principal to commit to a general randomized mechanism, which assigns a lottery over actions to each report. There are no monetary transfers. The principal evaluates a mechanism by its lowest expected payoff across all state distributions. We model this as a zero-sum game in which the principal chooses a mechanism and Nature chooses the distribution to minimize the principal's expected payoff. \par

With quadratic preferences, each player's expected utility at a given state depends only on the mean and variance of the implemented action. Holding the mean fixed, both players' utilities are decreasing in the variance. Yet the principal can use differences in variance across reports to screen the agent. By offering different combinations of mean and variance, she can distinguish types without assigning each type his ideal action. Variance plays a role similar to money burning: it allows the principal to adjust the utility associated with a mean action without making a monetary transfer. This gives her more flexibility in choosing the mean function, allowing her to respond to the agent's information while keeping the mean action closer to her ideal action. The added variance, however, reduces the principal's own payoff. The optimal mechanism determines where this greater control over the mean function is worth the cost of randomization. \par

Our first main result characterizes the robustly optimal mechanism under positive bias. The optimal mechanism features up to three regions: accommodation, calibrated randomization, and pooling. In the \emph{accommodation region} at low states, the principal assigns the agent his ideal action with zero variance, because disagreement is small enough for accommodation to preserve the principal's guarantee. In the \emph{calibrated-randomization region}, the mechanism assigns a distinct lottery to each report and uses variance to keep the mean function below the agent's ideal action while continuing to respond to his information. At high reports, the mechanism enters the \emph{pooling region}, where it assigns a common lottery because further screening is too costly. \par

We establish optimality by constructing a state distribution under which no incentive-compatible mechanism gives the principal a higher expected payoff than the guarantee attained by our mechanism. The mechanism and this distribution therefore form a saddle point. We extend the characterization to negative bias, where the agent prefers lower actions than the principal. The order of the regions is reversed: pooling occurs at low states, calibrated randomization at intermediate states, and accommodation at high states whenever it is present. The principal now uses randomization to keep the mean function above the agent's ideal action. \par

We then consider ideal-action functions that cross. When the agent's ideal action is more responsive to the state than the principal's, the bias changes from negative to positive. The agent overreacts to changes in the state from the principal's perspective: below the point of agreement he prefers a lower action than the principal, and above it he prefers a higher action. When his ideal action is less responsive, the bias changes from positive to negative. The agent is conservative in the sense that his preferred action remains closer to the action at which the players agree than the principal's does. \par

We obtain the robustly optimal mechanism by splitting the state space at the point of agreement and solving the delegation problem on each side. With increasing bias, both regional mechanisms accommodate the agent near the crossing and can be joined directly. With decreasing bias, their mean functions meet at the common ideal action, but their variances can differ. We combine the regional mechanisms by leaving their mean functions unchanged and adding the difference between the regional worst-case losses to the variance at every report in the region with the smaller loss. This preserves incentives within each region and makes the agent's truthful utility continuous at the crossing, so there is no profitable deviation to a report on the other side. In both cases, the overall guarantee is determined by the regional problem with the larger worst-case loss. \par

With a more responsive agent, the principal seeks to restrain excessive responses to states away from the point of agreement. Near the crossing, disagreement is small and the optimal mechanism accommodates the agent. Further away, the mechanism uses randomization to limit his response, with pooling at the extremes. The mechanism therefore assigns higher variance to more extreme reports, while the common ideal action at the crossing is implemented without randomization. \par

With a conservative agent, the principal instead wants actions to respond more strongly to the state than the agent. This changes the role of variance in the mechanism. Offering the common ideal action with zero variance would make central reports too attractive to types whom the principal wants to induce to choose more extreme mean actions. The optimal mechanism assigns higher variance to reports near the crossing and lower variance to reports further away, making these more responsive allocations attractive to the agent. The principal therefore uses randomization even at the state where their ideal actions agree. \par

In practice, a principal often delegates several decisions to the same agent. In addition to not knowing the marginal state distribution for each decision, the principal may not know how the states are correlated across decisions. We extend our analysis to this setting, where the principal maximizes the worst-case expected payoff over all joint distributions of the states. This allows for uncertainty about both the marginal distributions and the dependence between states. We characterize a robustly optimal mechanism when preferences are additively separable across dimensions and each player’s ideal-action functions in each dimension are transformations of the player’s reference ideal-action function.  This includes the case in which each player’s ideal-action function is identical across dimensions. We show that separate delegation is robustly optimal under these conditions. This simple rule applies the optimal one-dimensional mechanism separately in each dimension, so the lottery assigned in a given dimension depends only on the state reported for that dimension.  \par\medskip

\textbf{Literature Review.}
Our paper contributes to the literature on delegation initiated by \textcite{holmstrom1984delegation}. The literature has mainly considered a principal who chooses a delegation set, from which an informed agent selects his preferred action (\textcite{melumad1991communication}, \textcite{martimort2006continuity}, \textcite{alonso2008optimal}, \textcite{amador2013delegation}, \textcite{amador2025generalized}). We allow the principal to commit to a mechanism that randomizes over actions for each report, as in \textcite{kovavc2009stochastic}, \textcite{kleiner2021extreme}, and \textcite{kolotilin2025persuasion}.\footnote{See also related work on money burning and nonmonetary incentives (\textcite{ambrus2017nonmonetary}, \textcite{amador2020money}).} \par

Our paper is also related to the growing literature on robustness in mechanism design (\textcite{bergemann2005robust}, \textcite{bergemann2008pricing}, \textcite{bergemann2011pricing}, \textcite{carroll2015linear}).\footnote{See \textcite{carroll2019survey} for a survey.} Within delegation, \textcite{frankel2014aligned} studies a principal who knows the state distribution but is uncertain about the agent's preferences. \textcite{alonso2026robust} consider a multidimensional problem in which the principal knows the agent's ideal action in each state but has limited knowledge of the agent's utility function. Under their restrictions on preference uncertainty, they establish the max-min optimality of convex delegation. This result continues to hold when ambiguity also covers all state distributions. We instead take preferences as known and optimize over arbitrary incentive-compatible lotteries over actions. \par

A related paper on robustness with respect to the state distribution is \textcite{jia2026random}. The paper studies quadratic preferences with constant bias and characterizes an optimal randomized mechanism and a worst-case distribution when the principal knows the mean of the state. Our analysis characterizes the robustly optimal mechanism for general monotone bias, including preferences that cross. \textcite{carrasco2013robust} study delegation under ambiguity about the state distribution and show that full delegation is optimal in their setting with an unbounded state space. \textcite{guo2023regret} study a regret-minimization problem in which the principal does not know which projects are available to the agent. \par

Finally, our paper relates to the literature on delegating multiple decisions (\textcite{koessler2012multidimensional}, 
\textcite{frankel2014aligned}, \textcite{frankel2016multiple}, \textcite{kleiner2025delegation}). \textcite{kleiner2025delegation} uses a convex representation of the agent's indirect utility to characterize incentive compatibility and optimality in a multidimensional model with randomized mechanisms.

A central question in this literature is whether the principal benefits from linking decisions (\textcite{jackson2007linking}). \textcite{frankel2016multiple} shows that linking can benefit the principal even with finitely many decisions. With quadratic preferences and independently normally distributed states, the optimal delegation set constrains a weighted sum of actions, allowing greater discretion in one decision to be offset by restraint in another. Whether linking helps depends on what the principal is uncertain about. The max-min optimal mechanisms in \textcite{frankel2014aligned}, such as rankings, budgets, and sequential quotas, all link decisions together. Under ambiguity about the joint state distribution, we find that the principal gains nothing from linking. \textcite{ball2026quota} study quota mechanisms with finitely many decisions and states drawn independently across decisions; they bound the decision errors caused by sampling variation and misspecified quotas and show that their guarantee cannot be improved by other linking mechanisms. \textcite{carroll2017separation} provides a related robust separation result in screening with monetary transfers: when the principal knows the marginal type distributions but not their dependence, separate screening is optimal. We take the joint distribution to be entirely unknown and ask when separate delegation is optimal.

The paper is organized as follows. Section~\ref{sec:model} introduces the model. Section~\ref{sec:main-results} characterizes the optimal mechanism and worst-case distribution. Section~\ref{sec:multiple-dimensions} extends the analysis to multiple dimensions. Appendix~\ref{app:main-results} contains the main proofs. Appendix~\ref{app:negative-bias} contains the reflection argument for negative bias. \par

\section{Model}
\label{sec:model}

There is an agent (he) who observes the underlying state of the world $\theta\in\Theta=[0,1]$, with cumulative distribution function (CDF) $F$. There is a principal (she) who controls an action $a\in\mathcal{A}\subset\mathbb{R}$. The principal neither knows $\theta$ nor the distribution $F$ from which $\theta$ is drawn. Let $\mathcal{F}(\Theta)$ denote the set of CDFs of Borel probability distributions on $\Theta$. Write $\operatorname{supp}(F)$ for the support of the Lebesgue-Stieltjes measure associated with $F$.

Denote the principal's and agent's ideal-action functions by
\[
    y_P,y_A:[0,1]\rightarrow\mathcal{A},
\]
respectively. These represent the ideal action each player would like to take in a given state. We assume that both ideal-action functions are continuously differentiable and strictly increasing. The action space $\mathcal{A}$ is a compact interval of the real line and is assumed to be sufficiently large to implement the action lotteries constructed below. \par
Preferences are given by quadratic loss. The principal's and agent's realized (ex post) utilities are, respectively,
\begin{align*}
    u_P(a,\theta)
    &=-\bigl(a-y_P(\theta)\bigr)^2,\\
    u_A(a,\theta)
    &=-\bigl(a-y_A(\theta)\bigr)^2.
\end{align*}

We denote the bias by
\[
    b(t):=y_A(t)-y_P(t).
\]
We assume that the bias is either weakly increasing or weakly decreasing and satisfies a strict single-crossing property: there is at most one state $\theta\in[0,1]$ such that $b(\theta)=0$.

The principal can choose any randomized direct mechanism, i.e., a Borel
probability kernel
\begin{equation*}
    m:\Theta\rightarrow\Delta(\mathcal{A}).
\end{equation*}
A mechanism maps a reported state $\hat\theta$ to a probability distribution $m(\hat\theta)$ over actions.\footnote{We let
$\Delta(\mathcal{A})$ denote the set of Borel probability distributions on $\mathcal{A}$ and write $dm(a\mid\hat\theta)$ for integration with respect to the action distribution $m(\hat\theta)$.} We denote the set of all direct mechanisms by $\tilde{\mathcal{M}}$. \par
The principal solves a robust maximization problem. Specifically, for any mechanism $m$, the principal evaluates its payoff by taking the infimum over all $F\in\mathcal{F}(\Theta)$. We represent this worst-case evaluation as a zero-sum game in which Nature chooses a state distribution
to minimize the principal's expected payoff.

Because $\mathcal{A}$ is compact, every outcome lottery has finite first and second moments. We let
\begin{align}
    \mu_m(\hat\theta)
    &:=
    \int_{\mathcal{A}}a\,dm(a\mid\hat\theta),
    \label{eq:mechanism-mean}\\
    \tau_m(\hat\theta)
    &:=
    \int_{\mathcal{A}}
    \bigl(a-\mu_m(\hat\theta)\bigr)^2\,dm(a\mid\hat\theta)
    \label{eq:mechanism-variance}
\end{align}
denote the mean action and variance of the implemented action following the report $\hat\theta$, respectively. When the mechanism is clear from the context, we write $\mu$ for $\mu_m$ and $\tau$ for $\tau_m$.

For each player $j\in\{P,A\}$, let
\begin{equation*}
    U_{j,m}(\theta,\hat\theta):=\int_{\mathcal{A}}u_j(a,\theta)\,dm(a\mid\hat\theta)
\end{equation*}
denote expected utility when the agent reports $\hat\theta$ at state $\theta$. The mechanism is incentive compatible if the agent finds it optimal to report the true
state, or
\begin{equation}
    U_{A,m}(\theta,\theta)
    \geq U_{A,m}(\theta, \hat\theta)
    \qquad
    \text{for every }\theta,\hat\theta\in\Theta.
    \tag{IC}
    \label{eq:IC}
\end{equation}
We denote the set of incentive-compatible direct mechanisms by $\mathcal{M}$. When the mechanism is incentive compatible, we abuse notation slightly and write $U_{j,m}(\theta)$ for $U_{j,m}(\theta,\theta)$, for $j\in\{P,A\}$. We assume that whenever truthful reporting is among the agent's optimal reports, the agent reports truthfully. 

Given an incentive-compatible mechanism, the principal's ex ante utility for a specific CDF $F$ is
\begin{align*}
    V(m,F)
    &:=\int_{\Theta}U_{P,m}(\theta)\,dF(\theta)\\
    &=\int_{\Theta}\int_{\mathcal{A}}
    u_P(a,\theta)\,dm(a\mid\theta)\,dF(\theta).
\end{align*}
The principal's objective is then
\[
    \sup_{m\in\mathcal{M}}
    \inf_{F\in\mathcal{F}(\Theta)}
    V(m,F).
\]
First, the principal and Nature simultaneously commit to a
mechanism $m$ and a CDF $F$, respectively. Then a state $\theta$ is realized according to $F$, which is observed by the agent, who then sends a report to the mechanism. Finally, an allocation is implemented by the mechanism and payoffs are realized.

\section{Main Results}
\label{sec:main-results}

We first characterize incentive compatibility and state the saddle-point
criterion. We then present the saddle point under positive bias, provide
some intuition, and illustrate the mechanism with examples.
In Section~\ref{subsec:negative-crossing-bias-saddle}, we extend the result
to a sign-changing bias. All proofs are relegated to the
appendix.

\subsection{Incentive Compatibility and Saddle Points}
\label{subsec:ic-saddle}

With quadratic preferences, we can decompose expected utility into the
variance and the squared distance between the mean action and the
player's ideal action:
\begin{equation}
    U_j(\theta)
    =-\tau(\theta)
    -\bigl(\mu(\theta)-y_j(\theta)\bigr)^2,
    \qquad j\in\{P,A\}.
    \label{util_decomp}
\end{equation}
The mean and variance therefore determine each player's expected
utility under truthful reporting.

We now characterize the mean and variance functions of an
incentive-compatible mechanism.
The following lemma states the standard characterization of incentive
compatibility in this setting, due to
\textcite{kovavc2009stochastic}.
Our proof adapts their argument to the preferences in our model.

\begin{lemma}
\label{lem:ic-characterization}
Let $\mu$ and $\tau$ denote the mean and variance of a mechanism
$m\in\tilde{\mathcal{M}}$. Then $m$ is incentive compatible if and
only if:
\begin{align*}
    (\mathrm{IC}_1)\quad
    &\mu(\theta)\text{ is nondecreasing on }\Theta;\\[1mm]
    (\mathrm{IC}_2)\quad
    &U_A(\theta_2,\theta_2)-U_A(\theta_1,\theta_1)
    =2\int_{\theta_1}^{\theta_2}
    [\mu(t)-y_A(t)]\,y_A'(t)\,dt
    \\
    &\hspace{1cm}\text{for every }\theta_1,\theta_2\in\Theta;\\[1mm]
    (\mathrm{VAR})\quad
    &\tau(\theta)\geq 0
    \quad\text{for every }\theta\in\Theta.
\end{align*}
\end{lemma}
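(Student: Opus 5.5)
The proof reduces the problem to a one-dimensional screening problem with a money-burning-like instrument and then applies the envelope-theorem characterization of \textcite{kovavc2009stochastic}. By the decomposition \eqref{util_decomp}, applied to an arbitrary report, the agent's utility from reporting $\hat\theta$ at state $\theta$ is
\begin{align*}
U_A(\theta,\hat\theta)
&=-\tau(\hat\theta)-\bigl(\mu(\hat\theta)-y_A(\theta)\bigr)^2\\
&=\underbrace{-\tau(\hat\theta)-\mu(\hat\theta)^2}_{=:\,w(\hat\theta)}+2\mu(\hat\theta)\,y_A(\theta)-y_A(\theta)^2 .
\end{align*}
Reparametrize the type by $s=y_A(\theta)$. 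Since $y_A$ is continuous and strictly increasing, this is a bijection onto an interval. The agent's payoff is then linear in $s$ with slope $2\mu(\hat\theta)$, plus the type-only term $-s^2$. Incentive compatibility of $m$ is therefore equivalent to incentive compatibility of the menu $\{(\mu(\hat\theta),w(\hat\theta))\}$ in a quasilinear problem in which $w$ plays the role of a transfer. (VAR) holds automatically for any mechanism, since variances are nonnegative.

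\emph{Necessity.} Assume (IC) and fix $\theta_1<\theta_2$. Adding the two IC inequalities and cancelling the $w$ and $-y_A^2$ terms gives
\[
2\bigl(\mu(\theta_2)-\mu(\theta_1)\bigr)\bigl(y_A(\theta_2)-y_A(\theta_1)\bigr)\ge 0 .
\]
Because $y_A$ is strictly increasing, this yields (IC$_1$). For (IC$_2$), let $W(\theta)=U_A(\theta,\theta)=\max_{\hat\theta}U_A(\theta,\hat\theta)$. For each fixed report, $U_A(\cdot,\hat\theta)$ is differentiable in $\theta$ with derivative $2\bigl(\mu(\hat\theta)-y_A(\theta)\bigr)y_A'(\theta)$. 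This derivative is bounded uniformly in $\hat\theta$ because $\mathcal{A}$ is compact, $y_A$ is $C^1$ on a compact set, and $|\mu|$ is bounded. The envelope theorem of Milgrom and Segal therefore applies: $W$ is absolutely continuous, and at every point of differentiability $W'(\theta)=2\bigl(\mu(\theta)-y_A(\theta)\bigr)y_A'(\theta)$. Integrating gives (IC$_2$).

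\emph{Sufficiency.} Assume (IC$_1$) and (IC$_2$). For any $\theta,\hat\theta$, write
\begin{align*}
U_A(\theta,\theta)-U_A(\theta,\hat\theta)
&=\bigl[U_A(\theta,\theta)-U_A(\hat\theta,\hat\theta)\bigr]-\bigl[U_A(\theta,\hat\theta)-U_A(\hat\theta,\hat\theta)\bigr].
\end{align*}
By (IC$_2$), the first bracket equals $2\int_{\hat\theta}^{\theta}(\mu(t)-y_A(t))y_A'(t)\,dt$. The second bracket involves a fixed report and is a difference of explicit quadratics, so it equals $2\int_{\hat\theta}^{\theta}(\mu(\hat\theta)-y_A(t))y_A'(t)\,dt$. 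Subtracting leaves
\[
2\int_{\hat\theta}^{\theta}\bigl(\mu(t)-\mu(\hat\theta)\bigr)y_A'(t)\,dt .
\]
This is nonnegative in both orientations. If $\theta>\hat\theta$, the integrand is nonnegative by monotonicity of $\mu$ and $y_A'\ge0$. If $\theta<\hat\theta$, the integrand is nonpositive while the integral runs backwards. This proves (IC).

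The main obstacle is the regularity needed in the necessity direction. One must justify the envelope formula for an arbitrary Borel kernel, for which no continuity of $\mu$ or $\tau$ in the report is assumed. The bounded-derivative version of Milgrom--Segal avoids this, since it only needs equi-Lipschitz payoffs in $\theta$, and that follows from compactness of $\mathcal{A}$ and of $\Theta$. Measurability of $\mu$ for the integral in (IC$_2$) follows from monotonicity (IC$_1$), which is established first.
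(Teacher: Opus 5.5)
Your proof is correct and follows essentially the same route as the paper: you add the two IC inequalities to get monotonicity, apply the Milgrom--Segal envelope theorem with a uniformly bounded derivative, and prove sufficiency via the identity $U_A(\theta,\theta)-U_A(\theta,\hat\theta)=2\int_{\hat\theta}^{\theta}[\mu(t)-\mu(\hat\theta)]\,y_A'(t)\,dt$. The quasilinear reparametrization with $w$ as a transfer is useful framing, but none of the steps depend on it.
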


The lemma gives two restrictions on how the mechanism can respond to
the agent's report. The mean action must be nondecreasing, and the
agent's truthful utility must satisfy the envelope condition
$(\mathrm{IC}_2)$. Given the mean function and the agent's utility at
one state, this condition determines his truthful utility at every
other state. The utility decomposition in \eqref{util_decomp} then
determines the variance. 

We solve the principal's robust problem by finding a mechanism and a
CDF that are best responses to each other. Such a pair $(m^*,F^*)$
is a saddle point if
\begin{equation*}
    V(m^*,F)\geq V(m^*,F^*)\geq V(m,F^*)
    \label{eq:saddle}
\end{equation*}
for every $m\in\mathcal{M}$ and $F\in\mathcal{F}(\Theta)$.
The first inequality means that $m^*$ guarantees the principal at least
$V(m^*,F^*)$, whatever distribution Nature chooses. The second means
that, under $F^*$, no incentive-compatible mechanism gives her higher utility.
Thus, Nature can prevent the principal from improving on the guarantee
provided by $m^*$. This establishes that $m^*$ solves the robust
problem and that its value is $V(m^*,F^*)$.

\subsection{The Robust Mechanism  under Positive Bias}
\label{subsec:positive-bias-saddle}
We now assume $b(\theta)\geq0$ for every $\theta\in\Theta$ and state the saddle point under this restriction. We say that the principal and agent have \textit{overlapping ideal-action ranges} if $y_P(\Theta)\cap y_A(\Theta)$ contains a nondegenerate interval. When the bias is positive, this is equivalent to $y_A(0)< y_P(1)$. We first consider the case in which players have \textit{non-overlapping ideal-action ranges}. 

\paragraph{Non-overlapping ideal-action ranges} When $b(\theta)>0$ the preferences satisfy \textit{non-over-lapping ideal-action ranges} when $y_A(0)\geq y_P(1)$. We first characterize the robustly optimal mechanism in this setting:
\begin{proposition}
\label{prop:complete-pooling}
Suppose the preferences satisfy \textit{non-overlapping ideal-action ranges}. The mechanism $m^*$
which assigns the deterministic action
\begin{equation*}
    \mu(\theta)=\frac{y_P(0)+y_P(1)}{2}
    \label{eq:complete-pooling-mechanism}
\end{equation*}
after every report is robustly optimal.
\end{proposition}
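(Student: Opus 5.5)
We establish a saddle point $(m^*,F^*)$ in the sense of Section~\ref{subsec:ic-saddle}, with $F^*$ placing mass $\tfrac12$ on each of the states $0$ and $1$. Write $\bar y:=\frac{y_P(0)+y_P(1)}{2}$ and $d:=\frac{y_P(1)-y_P(0)}{2}$.

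\emph{First inequality.} The mechanism $m^*$ is constant, so it is trivially incentive compatible: by Lemma~\ref{lem:ic-characterization} the mean is constant, hence nondecreasing, $\tau\equiv 0$, and $(\mathrm{IC}_2)$ holds because $U_A(\theta,\hat\theta)$ does not depend on $\hat\theta$. (Directly, the agent's report does not affect his lottery.) The principal's payoff at state $\theta$ is
\[
-\bigl(\bar y-y_P(\theta)\bigr)^2 .
\]
Since $y_P$ is increasing, this is minimized at $\theta\in\{0,1\}$, where it equals $-d^2$. Hence $V(m^*,F)\ge -d^2=V(m^*,F^*)$ for every $F$.

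\emph{Second inequality.} Take any $m\in\mathcal{M}$ with mean $\mu$ and variance $\tau$. By \eqref{util_decomp},
\[
V(m,F^*)=-\tfrac12\bigl[\tau(0)+\tau(1)+(\mu(0)-y_P(0))^2+(\mu(1)-y_P(1))^2\bigr].
\]
If $\mu(0)=\mu(1)$, then the mean-square term is at least $2d^2$, because the best common point is the midpoint, and we are done.

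The main step is to show that screening between $0$ and $1$ cannot help, and here non-overlap is used. Incentive compatibility between the two reports gives
\[
\tau(0)+(\mu(0)-y_A(0))^2\le \tau(1)+(\mu(1)-y_A(0))^2,
\]
\[
\tau(1)+(\mu(1)-y_A(1))^2\le \tau(0)+(\mu(0)-y_A(1))^2.
\]
Two facts follow. First, one may assume $\mu(0),\mu(1)\in[y_P(0),y_P(1)]$. Otherwise, replace the allocation by an alternative that still respects IC, namely clip the means and adjust the variances, and check that this does not lower the principal's payoff. A cleaner route is to argue directly with the inequalities, so I would avoid the replacement step if possible. Second, for $\mu(0)\le\mu(1)\le y_P(1)\le y_A(0)$, the type-$0$ agent weakly prefers the higher mean. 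So the first IC constraint forces
\[
\tau(1)-\tau(0)\ge (\mu(0)-y_A(0))^2-(\mu(1)-y_A(0))^2 .
\]
Expanding the right-hand side gives $(\mu(1)-\mu(0))\bigl(2y_A(0)-\mu(0)-\mu(1)\bigr)$. Since $y_A(0)\ge y_P(1)$, this is at least $(\mu(1)-\mu(0))\bigl(2y_P(1)-\mu(0)-\mu(1)\bigr)$.

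Now substitute $\tau(0)\ge0$ and this lower bound on $\tau(1)$ into $V(m,F^*)$. Let $x=\mu(0)$ and $z=\mu(1)$. We must show
\[
(z-x)(2y_P(1)-x-z)+(x-y_P(0))^2+(z-y_P(1))^2\ge 2d^2 .
\]
This should reduce to a complete square. The $z$-terms combine as $(z-x)(2y_P(1)-x-z)+(z-y_P(1))^2=(x-y_P(1))^2$, which is independent of $z$. The expression therefore becomes $(x-y_P(0))^2+(x-y_P(1))^2\ge 2d^2$, which is minimized at $x=\bar y$.

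So the only real obstacle is justifying the case analysis on the ordering of $\mu(0),\mu(1)$ relative to $y_A(0)$. If $\mu(1)>y_A(0)$, the bound $2y_A(0)-x-z\ge 2y_P(1)-x-z$ still holds. Moreover, the identity above used only the algebraic lower bound, not any sign restriction, so the inequality holds for all $x\le z$, and monotonicity $(\mathrm{IC}_1)$ supplies $x\le z$. Hence $V(m,F^*)\le -d^2$, completing the saddle point.
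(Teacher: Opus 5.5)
Your proof is correct and follows essentially the same route as the paper. Your identity $(z-x)(2y_P(1)-x-z)+(z-y_P(1))^2=(x-y_P(1))^2$, combined with the type-$0$ incentive constraint against report $1$, $y_A(0)\ge y_P(1)$ and $(\mathrm{IC}_1)$, is exactly the paper's observation that $U_{P,m}(1,0)\ge U_{P,m}(1,1)$; both arguments then complete the square around $\frac{y_P(0)+y_P(1)}{2}$ and use $\tau(0)\ge 0$. The clipping discussion, the second incentive constraint, the separate $\mu(0)=\mu(1)$ case, and the case split on the ordering relative to $y_A(0)$ are never used and can simply be deleted.
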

The proposed mechanism assigns the same deterministic action after every report. This action lies halfway between the principal's lowest and highest ideal actions, giving her the same utility at the two extreme states and higher utility at every interior state. The worst-case distribution can then only be supported on the two extreme states, and must make it a best response for the principal to implement the average of her ideal actions at the two endpoint states. The following distribution therefore constitutes a saddle point with $m^*$:
\[
F^*(\theta)=
    \begin{cases}
        0,&\theta<0,\\
        \tfrac12,&0\leq\theta<1,\\
        1,&\theta\geq1.
    \end{cases}
    \label{eq:complete-pooling-cdf}
\]
The principal cannot gain by offering different lotteries at the two
endpoint reports. To see this, take any incentive-compatible mechanism
$m$ and first suppose $y_A(0)=y_P(1)$. The agent at $0$ and the
principal at $1$ then have identical preferences over lotteries, so
incentive compatibility gives
\begin{equation*}
    U_{P,m}(1,0)=U_{A,m}(0,0)
    \geq U_{A,m}(0,1)=U_{P,m}(1,1).
    \label{eq:pooling-lottery-comparison}
\end{equation*}
In any incentive-compatible mechanism, the principal at state $1$ weakly prefers the lottery assigned after report $0$ to the lottery assigned after report $1$. The principal therefore weakly improves her payoff by also assigning the zero report lottery to the report of one. The incentive problem is even more severe when $y_A(0)>y_P(1)$ since any increase in the mean between the two lotteries improves the agent's utility at state zero by more than the principal's utility at state one. 

\paragraph{Overlapping ideal-action ranges} We now consider preferences with overlapping ideal-action ranges. We first describe the optimal mechanism and state our main result. We then provide some intuition and examples. \par
The mechanism features up to three regions. We specify these regions using two cutoffs which we denote by $\underline{\theta}$ and $\bar{\theta}$. The upper cutoff is defined as
\[
\bar{\theta}:=y_A^{-1}\bigl(y_P(1)\bigr).
\]
This is the state at which the agent's ideal action equals the principal's ideal action in the highest state. Given that $y_A(\theta)$ is strictly increasing, it exists uniquely and lies in $(0,1]$. \par
To describe the lower cutoff and the robustly optimal mechanism, we first introduce the following function:
\[
d(\theta)
    :=\frac{b(\bar{\theta})}{2}
    \exp\!\left(-\int_\theta^{\bar{\theta}}
    \frac{y_P'(t)}{b(t)}\,dt\right)+\int_\theta^{\bar{\theta}}y_P'(t)
    \exp\!\left(-\int_\theta^t
    \frac{y_P'(s)}{b(s)}\,ds\right)\,dt.
    \label{eq:positive-distance}
\]
The function $d(\theta)$ describes the difference between the average action implemented under report $\theta$ and the principal's ideal action for the intermediate region as we will formally define below.\footnote{If $b(1)=0$, the upper cutoff is $\bar{\theta}=1$ and the integrals with upper endpoint $1$ are understood as improper integrals. We evaluate them by their one-sided limits and use $\exp(-\infty)=0$. This convention also applies to the CDF below, while the distance at the endpoint is $d(1)=0$.}
We can then define the lower cut-off. Define
\[
\underline{\theta}:=\inf\{\theta\in[0,\bar{\theta}]:
        b(\theta)>0,\ d(\theta)\leq b(\theta)\}.
\]
The lower cutoff is the first state with positive bias at which the candidate mean $y_P(\theta)+d(\theta)$ is weakly below the agent's ideal action. We now define the robustly optimal mechanism $m^* = (\mu^*,\tau^*)$:
\[
\mu^*(\theta)=
    \begin{cases}
        y_A(\theta),&0\leq\theta<\underline{\theta},\\[1mm]
        y_P(\theta)+d(\theta),
            &\underline{\theta}\leq\theta\leq\bar{\theta},\\[1mm]
        \dfrac{y_P(\bar{\theta})+y_P(1)}{2},
            &\bar{\theta}<\theta\leq1,
    \end{cases}
    \text{ and }\tau^*(\theta) = 
    \begin{dcases}
0,&0\leq\theta<\underline{\theta},\\[1mm]
d(\underline{\theta})^2-d(\theta)^2,&\underline{\theta}\leq\theta\leq\bar{\theta}\\[1mm]
d(\underline{\theta})^2-\frac{b(\bar{\theta})^2}{4},&\bar{\theta}<\theta\leq 1
    \end{dcases}
\]
\begin{theorem}
\label{thm:positive-bias-saddle}
Suppose the preferences satisfy \textit{overlapping ideal-action ranges}. The mechanism $m^*$ is robustly optimal. The value
of the robust problem is
\[
V(m^*,F^*)=-d(\underline{\theta})^2.
\]
\end{theorem}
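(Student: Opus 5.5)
The plan is to exhibit a CDF $F^*$ and verify the two saddle-point inequalities from Section~\ref{subsec:ic-saddle} for the pair $(m^*,F^*)$. The first step is to check that $m^*$ is feasible. Differentiating the definition of $d$ gives the linear ODE
\[
d'(\theta)=\frac{y_P'(\theta)}{b(\theta)}\,d(\theta)-y_P'(\theta),
\qquad d(\bar\theta)=\tfrac{b(\bar\theta)}{2}.
\]
Hence on $[\underline\theta,\bar\theta]$ we have $\mu^{*\prime}=y_P'+d'=y_P'\,d/b\geq0$. I will also show that $d>0$ on this interval.
\begin{itemize}
\item (IC$_1$): $\mu^*$ is continuous at $\bar\theta$, because $y_P(\bar\theta)+b(\bar\theta)/2=(y_P(\bar\theta)+y_P(1))/2$ using $y_A(\bar\theta)=y_P(1)$. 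It is also continuous at $\underline\theta$, where $d(\underline\theta)=b(\underline\theta)$ by definition of the cutoff (or $\underline\theta=0$). So $\mu^*$ is nondecreasing.
\item (VAR): $\tau^*\geq0$ follows once $d$ is shown to be decreasing in the region, which I expect to get from the ODE together with $d\leq b$ there.
\item (IC$_2$): on the middle region, $U_A=-d(\underline\theta)^2+d^2-(d-b)^2$. Differentiating and using the ODE should reproduce $2(\mu^*-y_A)y_A'$. The other two regions are standard: accommodation with zero variance, and a constant lottery under pooling.
\end{itemize}
By Lemma~\ref{lem:ic-characterization}, $m^*\in\mathcal M$.

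Second, I compute the principal's payoff $U_P=-\tau^*-(\mu^*-y_P)^2$ region by region.
\begin{itemize}
\item On $[\underline\theta,\bar\theta]$ it equals $-d(\underline\theta)^2$ exactly.
\item On $(\bar\theta,1]$ it equals $-d(\underline\theta)^2+b(\bar\theta)^2/4-(\mu^*-y_P(\theta))^2$. This is at least $-d(\underline\theta)^2$ because $|\mu^*-y_P(\theta)|\leq (y_P(1)-y_P(\bar\theta))/2=b(\bar\theta)/2$, with equality at $\theta=1$.
\item On $[0,\underline\theta)$ it equals $-b(\theta)^2$, and I need $b(\theta)\leq d(\underline\theta)$. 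I plan to obtain this from the definition of $\underline\theta$ ($d>b$ below it) and the monotonicity of $d$, handling zero-bias points separately.
\end{itemize}
This gives $V(m^*,F)\geq -d(\underline\theta)^2$ for every $F$. It also shows that any worst case must put all its mass on $[\underline\theta,\bar\theta]\cup\{1\}$.

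Third, and this is the main step, I construct $F^*$ on that set and show no $m\in\mathcal M$ does better against it. The approach is a relaxation. I write
\[
U_P=U_A-2b\mu+y_A^2-y_P^2
\]
and express $U_A(\theta)$ through (IC$_2$) anchored at $\underline\theta$, where $\tau^*=0$. I then replace (VAR) by the single constraint
\[
U_A(\underline\theta)\leq-(\mu(\underline\theta)-y_A(\underline\theta))^2,
\]
and drop monotonicity. Integrating by parts (Fubini) turns $V(m,F^*)$ into $\int$ of a pointwise expression in $\mu(t)$ of the form
\[
2\mu(t)\bigl[(1-F^*(t))y_A'(t)-b(t)f^*(t)\bigr]
\]
plus terms not involving $\mu$, an atom term at $1$, and the anchor term at $\underline\theta$. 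I would first try to choose $F^*$ so that, on $(\underline\theta,\bar\theta)$, the first-order condition of the relaxed problem in $\mu$ holds at $\mu^*$. Rather than the linear bracket above, the relevant objective is the quadratic one. This yields a first-order ODE for $1-F^*$ whose integrating factor is $\exp(-\int y_P'/b)$, matching the exponential kernel in $d$.

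The remaining pieces of this step are as follows.
\begin{itemize}
\item The mass on $(\bar\theta,1)$ is zero.
\item The atom at $1$ is chosen so that the pooled mean $(y_P(\bar\theta)+y_P(1))/2$ is optimal for types above $\bar\theta$, mirroring Proposition~\ref{prop:complete-pooling}.
\item The atom at $\underline\theta$ (if any) is chosen so that the anchor constraint has a correct nonnegative multiplier.
\end{itemize}
The main obstacle is verifying that the resulting $F^*$ is a genuine CDF, meaning nonnegative density and total mass one, which requires sign information on $b$ and $d$. I also need to check that the relaxed maximizer coincides with $\mu^*$, and hence is monotone, so that dropping (IC$_1$) is harmless. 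Given these, $V(m,F^*)\leq V(m^*,F^*)=-d(\underline\theta)^2$ for every $m\in\mathcal M$, which completes the saddle point.
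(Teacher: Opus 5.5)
Your first two steps are essentially the paper's. Your bound on $[0,\underline\theta)$ also works, by a different route from the paper's: $d'>0$ wherever $d>b$, so $b(\theta)<d(\theta)\le d(\underline\theta)$. You do still owe the argument that $d\le b$ on all of $[\underline\theta,\bar\theta]$.

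The genuine gap is in the third step. You claim that dropping (IC$_1$) is harmless because the relaxed maximizer will be $\mu^*$. This is false whenever $\bar\theta<1$, i.e.\ $b(1)>0$, as in Example~\ref{ex:constant-bias}. After the envelope substitution, your relaxed objective is linear in $\mu(t)$ for every $t>\underline\theta$. $F^*$ puts no mass on $(\bar\theta,1)$ and an atom $p_1:=1-F^*(1^-)>0$ at $1$. Hence the coefficient of $\mu(t)$ is $2y_A'(t)p_1>0$ on $(\bar\theta,1)$, while the atom contributes $-2b(1)p_1\mu(1)$, with a negative coefficient. So the relaxed problem strictly gains from raising $\mu$ on $(\bar\theta,1)$ while holding $\mu(1)$ fixed.

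Concretely, raise $\mu^*$ by $\varepsilon$ on $(\bar\theta,1)$ only and let the variance follow from (IC$_2$). This lowers $\tau(1)$ by $2\varepsilon(y_A(1)-y_A(\bar\theta))$ and raises the payoff against $F^*$ by $2\varepsilon p_1(y_A(1)-y_A(\bar\theta))$. When $\tau^*(\bar\theta)>0$, it even keeps every variance nonnegative. The relaxed value therefore strictly exceeds $-d(\underline\theta)^2$. No choice of the atom at $1$ helps, and that atom is not free anyway: it is pinned by the hazard-rate equation and the atom at $\underline\theta$. Monotonicity is exactly the constraint that binds on the pooling region.

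The repair is to keep (IC$_1$) and integrate by parts on $(\bar\theta,1]$. Using $y_A(\bar\theta)=y_P(1)$, the $\mu$-terms there combine into $-2p_1\int_{(\bar\theta,1]}(y_A(s)-y_P(1))\,d\mu(s)$. This is nonpositive for nondecreasing $\mu$ and zero for constant $\mu$. That is the paper's route. Lemma~\ref{lem:forward_bias_lemma} writes $V(m,F)$ as $-(\mathbb{E}_F[y_P]-\mu(\theta_\diamond))^2-\tau(\theta_\diamond)-\operatorname{Var}_F[y_P]-2\int S_F\,d\mu^L$ minus nonnegative jump terms. $F^*$ is built with $S_{F^*}=0$ on $[\underline\theta,\bar\theta]$ and $S_{F^*}\ge0$ beyond. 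Monotonicity and $\tau(\underline\theta)\ge0$ then give $V(m,F^*)\le-\operatorname{Var}_{F^*}[y_P]$, which $m^*$ attains.

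Alternatively, keep the single global constraint of type $\bar\theta$ against report $1$. As in Proposition~\ref{prop:complete-pooling}, it gives $U_{P,m}(1,1)\le U_{A,m}(\bar\theta)$ and removes every $\mu(t)$ with $t>\bar\theta$ from the bound. Only then can (IC$_1$) genuinely be dropped.

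Relatedly, there is no quadratic term in $\mu(t)$ on $(\underline\theta,\bar\theta)$, so your linear bracket must simply vanish. This gives hazard rate $y_A'/b$ (zero forward bias), independent of $d$. The function $d$ enters only through the anchor. Its first-order condition, $\mu(\underline\theta)=y_P(\underline\theta)+b(\underline\theta)[1-F^*(\underline\theta)]$, fixes the atom at $\underline\theta$ as $1-d(\underline\theta)/b(\underline\theta)$. The kernel $\exp(-\int y_P'/b)$ describes $b(1-F^*)$, not $1-F^*$.
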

The proof of Theorem~\ref{thm:positive-bias-saddle} proceeds in three steps. First, we show that $m^*$ is incentive compatible. We then show that $m^*$ guarantees expected utility at least $V(m^*,F^*)$ for all possible distribution functions. We then construct a distribution $F^*$, which guarantees the principal's expected utility is no higher than $V(m^*,F^*)$. Thus, $(m^*,F^*)$ forms a saddle point.
\par
\begin{figure}[!t]
    \centering
    \includegraphics[width=0.9\linewidth]{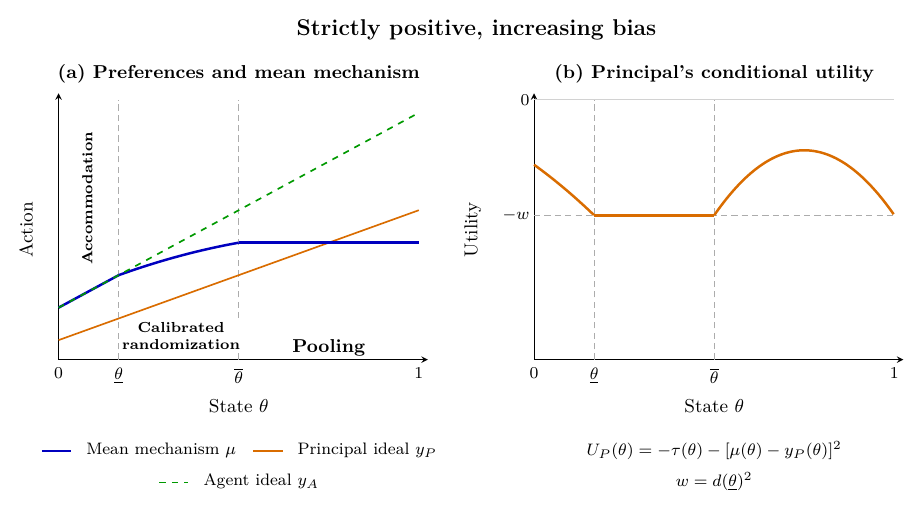}
    \caption{The robustly optimal mechanism with increasing bias: (a) the mean action; (b) the principal's utility at each state.}
    \label{fig:increasing-bias-utility}
\end{figure}
We now explain the robustly optimal mechanism. Figure~\ref{fig:increasing-bias-utility}(a) plots the robustly optimal mechanism for increasing bias. Figure~\ref{fig:increasing-bias-utility}(b) shows the principal's utility at each state given the robustly optimal mechanism. We describe each of the three regions featured in the mechanism in turn. The \textit{accommodation region} $[0,\underline{\theta})$ is nonempty when the bias is increasing and disagreement at low states is sufficiently small that implementing the agent's ideal action gives the principal more than her guarantee of $-d(\underline{\theta})^2$. In this region, the mechanism assigns the deterministic action $y_A(\theta)$, giving the principal utility $-b(\theta)^2$. As the state approaches $\underline{\theta}$, the payoff from accommodation falls to the value of the robust problem. The principal can therefore accommodate the agent throughout this region while preserving her worst-case guarantee. With decreasing bias, the accommodation region is empty. \par
For $\theta\in[\underline{\theta},\bar{\theta}]$, the mechanism uses \textit{calibrated randomization}. The mean is $\mu^*(\theta)=y_P(\theta)+d(\theta)$, where $d(\theta)$ measures the distance between the mean action and the principal's ideal action. We can understand this distance by solving backward from the pooling region. At the upper cutoff, the pooling lottery fixes the mean halfway between $y_P(\bar{\theta})$ and $y_P(1)$. Since $y_A(\bar{\theta})=y_P(1)$, this gives $d(\bar{\theta})=b(\bar{\theta})/2$. Starting from this boundary, the envelope condition and the requirement that the principal's utility remain constant jointly determine $d(\theta)$. Moving backward toward lower states, the variance weakly decreases while the distance from the principal's ideal action weakly increases. The reduction in variance exactly offsets the increase in squared distance, keeping the principal's utility at $-d(\underline{\theta})^2$ throughout the region. When accommodation is present, this backward construction reaches the agent's ideal action at the lower cutoff, where $d(\underline{\theta})=b(\underline{\theta})$ and the variance is zero. The mechanism then joins continuously with deterministic accommodation below the cutoff. Otherwise, calibrated randomization extends to state $0$. \par
We say that $(\bar{\theta},1]$ constitutes the pooling region, as all types receive the same lottery. The intuition follows from Proposition~\ref{prop:complete-pooling}. The agent's ideal action at $\bar{\theta}$ is equal to the principal's optimal action at one. They have the same preferences over lotteries, so under any IC mechanism the principal at state $1$ weakly prefers the lottery assigned after report $\bar{\theta}$ to the lottery assigned after report $1$. She therefore assigns a common lottery to all types in this interval which has a mean halfway between the two optimal actions at the end-points of the interval. \par 
We can see all three intervals in Figure~\ref{fig:increasing-bias-utility}. At low states the bias is relatively small and so the mechanism chooses the optimal action for the agent. The principal's payoff is above the guaranteed level in this region, so decreasing the mean action, which would tighten incentive constraints, is not beneficial. In the calibrated randomization region the payoff to the principal is constant at $-d(\underline{\theta})^2$. For a nonempty accommodating region, the variance is zero at $\underline{\theta}$ and the mean is equal to the agent's ideal action. At $\bar{\theta}$ the agent's ideal action is equal to the principal's ideal action at the highest state. The mean of the mechanism therefore is constant across $[\bar{\theta},1]$ and the principal's utility first increases and then decreases again, such that it equals $-d(\underline{\theta})^2$ at $\theta=1$.\par
\begin{figure}[!t]
    \centering
    \includegraphics[width=0.9\linewidth]{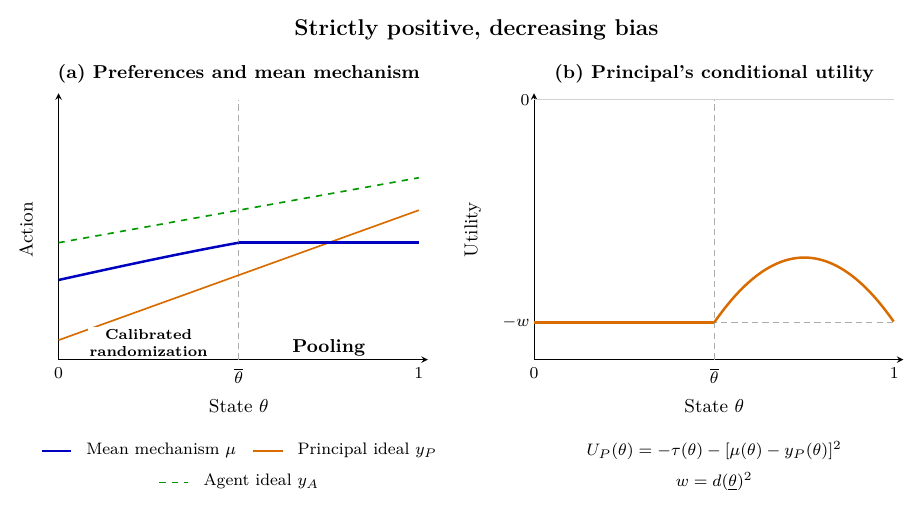}
    \caption{The robustly optimal mechanism with decreasing bias: (a) the mean action; (b) the principal's utility at each state.}
    \label{fig:decreasing-bias-utility}
\end{figure}
Figure~\ref{fig:decreasing-bias-utility}(a) shows a case where the preferences are such that the bias is decreasing. Figure~\ref{fig:decreasing-bias-utility}(b) shows the corresponding utility of the principal in each state. With a decreasing bias, the difference in ideal actions is larger at lower states. Implementing the agent's ideal action at these states is therefore not beneficial, and the mechanism only features the calibrated randomization and pooling regions. With constant bias $b(\theta)=\lambda\in(0,1)$ for all $\theta\in[0,1]$, the robustly optimal mechanism features calibrated randomization and pooling as we will show in Example \ref{ex:constant-bias}. \par 
By construction, the mechanism $m^*$ gives the principal the same utility $-d(\underline{\theta})^2$ at every state in $[ \underline{\theta},\bar{\theta}]\cup\{1\}$, and weakly higher utility at all other states. Therefore, any distribution supported on this set is a best response by Nature. The substantive problem is thus to identify a distribution $F^*$ under which $m^*$ is a best response; this is fulfilled by
\[
F^*(\theta)=
    \begin{cases}
        0,&\theta<\underline{\theta},\\[1mm]
        1-\dfrac{d(\underline{\theta})}{b(\underline{\theta})}
        \exp\!\left(-\displaystyle\int_{\underline{\theta}}^{\min\{\theta,\bar{\theta}\}}
        \frac{y_A'(t)}{b(t)}\,dt\right),
            &\underline{\theta}\leq\theta<1,\\[1mm]
        1,&\theta\geq1,
    \end{cases}
\]
The relevant property of $F^*$ is that throughout the calibrated randomization region, 
\[y_A(\theta)= \mathbb{E}_{F^*}[y_P(t)|t>\theta] \quad \forall \theta \in [\underline{\theta},\bar{\theta}]\]
So conditional on the state being higher than $\theta$, the principal's expected ideal action under $F^*$ coincides with the agent's ideal action at $\theta$.\footnote{If $\bar{\theta}=1$, the conditional expectation is defined only for $\theta<1$, since the event $t>1$ has probability zero. At $\theta=1$, the displayed identity is understood in the limiting sense: as $\theta\uparrow1$, the conditional expectation converges to $y_P(1)=y_A(1)$.} The property relates to the forward bias definition by \textcite{alonso2008optimal}; let $S_F$ be the forward bias of $F$ given $y_A$ and $y_P$; then the above equality is equivalent to $S_{F^*}(\theta)=0$ on this range. Importantly, this property makes the principal indifferent to marginal changes in the mean allocation $\mu(\theta)$ over $[\underline{\theta},\bar{\theta}]$ and pins down the distribution's hazard rate to equal $y_A^\prime(\theta)/b(\theta)$ in the interior of this range.

The following example gives the mechanism and distribution under constant bias.
\begin{example}
\label{ex:constant-bias}
Suppose $y_P(\theta)=\theta$ and $y_A(\theta)=\theta+\lambda$,
where $\lambda>0$. If $0<\lambda<1$, then
$\underline{\theta}=0$, $\bar{\theta}=1-\lambda$.
The optimal mechanism $m^*$ has the associated mean function:
\begin{align*}
    \mu^*(\theta)
    &=\begin{cases}
        \theta+\lambda\left(
        1-\frac12\exp\!\left(\frac{\theta-(1-\lambda)}{\lambda}\right)
    \right),&0\leq\theta\leq1-\lambda,\\[1mm]
        1-\lambda/2,&1-\lambda<\theta\leq1.
    \end{cases}
    \label{eq:constant-bias-mean}
\end{align*}
Define
\begin{equation*}
    F^*(\theta)=
    \begin{cases}
        0,&\theta<0,\\[1mm]
        1-\left(1-\dfrac12 e^{-(1-\lambda)/\lambda}\right)
            e^{-\min\{\theta,1-\lambda\}/\lambda},&0\leq\theta<1,\\[1mm]
        1,&\theta\geq1.
    \end{cases}
    \label{eq:constant-bias-cdf}
\end{equation*}
\end{example}
\subsection{The Robust Mechanism under Crossing Preferences}
\label{subsec:negative-crossing-bias-saddle}
The results from Theorem \ref{thm:positive-bias-saddle} can be applied to the setting with a negative bias by reflecting the state and action space. Reflection reverses the order of the three regions. Pooling occurs at low reports, calibrated randomization at intermediate reports, and accommodation at high reports. In the calibrated-randomization region, the mean lies above the agent's ideal action and the variance falls as the report increases. We provide more details in the appendix. \par

We now move on to the case where the principal's and agent's ideal-action functions cross. The robustly optimal mechanism can be found by combining the mechanisms from the two individual regions before and after the state at which they cross. Let $\theta_\times\in(0,1)$ denote the state at which they cross, so $y_A(\theta_\times)=y_P(\theta_\times)$. At this state the players agree on the optimal action, while their disagreement has opposite directions on the two sides. Consider the two problems obtained by restricting states and reports to $\Theta_1=[0,\theta_\times]$ and $\Theta_2=[\theta_\times,1]$.
Each is a special case of the preceding results: the bias has one sign, and the ideal actions agree at the right endpoint of $\Theta_1$ or the left endpoint of $\Theta_2$. With an increasing bias, the left problem has a negative bias and the right problem has a positive bias. The positive-bias result and its reflection therefore give a saddle point for each restricted problem. \par
Let $m_i^*$ denote the robustly optimal mechanism in region $i$, with mean $\mu_i^*$ and variance $\tau_i^*$. Write $w_i$ for its worst loss and $\bar{w}=\max\{w_1,w_2\}$ for the larger of the two. By Theorem~\ref{thm:positive-bias-saddle}, $w_i=d_i(\underline{\theta}_i)^2$, where $d_i$ and $\underline{\theta}_i$ are the distance function and lower cutoff used in the regional construction. For a negative-bias region, these refer to the reflected problem.\footnote{We treat $F_i^*$ as a CDF on $\Theta$ with all probability on $\Theta_i$: it is zero below that region and one at and above its upper endpoint.} \par
\begin{theorem}
\label{thm:crossing-bias-saddle}
The mechanism $m^*$ is robustly optimal and is given by the following mean and variance functions:
If the bias is increasing:
\begin{equation*}
    \bigl(\mu^*(\theta),\tau^*(\theta)\bigr)=
    \begin{cases}
        \bigl(\mu_1^*(\theta),\tau_1^*(\theta)\bigr), \;
            & 0\leq\theta\leq\theta_\times,\\[1mm]
        \bigl(\mu_2^*(\theta),\tau_2^*(\theta)\bigr), \;
            & \theta_\times<\theta\leq1,
    \end{cases}
    \label{eq:crossing-mechanism_incr}
\end{equation*}
and if the bias is decreasing:
\begin{equation*}
    \bigl(\mu^*(\theta),\tau^*(\theta)\bigr)=
    \begin{cases}
        \bigl(\mu_1^*(\theta),\tau_1^*(\theta)+\bar{w}-w_1\bigr),
            & 0\leq\theta\leq\theta_\times,\\[1mm]
        \bigl(\mu_2^*(\theta),\tau_2^*(\theta)+\bar{w}-w_2\bigr),
           & \theta_\times<\theta\leq1,
    \end{cases}
    \label{eq:crossing-mechanism_decr}
\end{equation*}
\end{theorem}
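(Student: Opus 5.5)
The plan is to exhibit a saddle point $(m^*,F^*)$. Nature's strategy is the regional worst case attaining the larger loss: $F^*=F_i^*$ for an index $i$ with $w_i=\bar w$, viewed as a CDF on $\Theta$ with all mass on $\Theta_i$. The proof then has three parts: (a) $m^*$ is incentive compatible on all of $\Theta$; (b) $m^*$ guarantees $-\bar w$ at every state; (c) under $F_i^*$, no incentive-compatible mechanism on $\Theta$ gives the principal more than $-\bar w$.

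Step (c) is the easiest. Take any $m\in\mathcal{M}$ and restrict it to reports in $\Theta_i$. Incentive compatibility over all of $\Theta$ implies incentive compatibility among types and reports in $\Theta_i$. The restricted mechanism is therefore feasible in the regional problem. By Theorem~\ref{thm:positive-bias-saddle}, or its reflection for the negative-bias region, it earns at most $-w_i=-\bar w$ against $F_i^*$. Since $F_i^*$ puts no mass outside $\Theta_i$, $V(m,F^*)\le-\bar w$.

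Step (b) is also direct. On $\Theta_j$, by the regional theorem, the principal's utility under $m_j^*$ is at least $-w_j$. In the decreasing case the added constant variance $\bar w-w_j$ lowers this by exactly that amount, to at least $-\bar w$. In the increasing case no variance is added, and $-w_j\ge-\bar w$ already. Hence $V(m^*,F)\ge-\bar w$ for every $F$. Together with (c), this shows $V(m^*,F^*)=-\bar w$ and gives the two saddle inequalities.

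The main step is (a), and I would check it with Lemma~\ref{lem:ic-characterization}. The variance is nonnegative by construction.

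For (IC$_1$), each $\mu_j^*$ is nondecreasing. I then need to compare the two pieces at $\theta_\times$. With increasing bias, each regional problem has its agreement point at its inner endpoint, $\theta_\times$. Near an agreement point the bias tends to zero, so $d>b$ fails there. The accommodation region is therefore the part adjacent to $\theta_\times$: for the right region it starts at $\theta_\times$, and for the left region it appears by reflection. Both pieces therefore equal $y_A$ near $\theta_\times$, with zero variance. I need to verify that the regional cutoff is not degenerate here, i.e. that $\underline\theta_2>\theta_\times$ whenever the guarantee is positive.

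With decreasing bias, there is no accommodation near $\theta_\times$. Instead the regional calibrated-randomization or pooling piece ends at $\theta_\times$. Since $b(\theta_\times)=0$, the footnote convention gives $d(\theta_\times)=0$, so both means equal $y_P(\theta_\times)=y_A(\theta_\times)$ there. Monotonicity then holds across the junction.

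For (IC$_2$), the envelope formula holds within each region because each regional mechanism is incentive compatible. Since $\mu^*$ is bounded, it suffices to show that the agent's truthful utility $U_A$ is continuous at $\theta_\times$; additivity of the integral then yields (IC$_2$) globally. In the increasing case both sides give $U_A\to0$. In the decreasing case, $U_A$ on side $j$ tends to $-\tau_j^*(\theta_\times)-(\bar w-w_j)$, because the mean converges to $y_A(\theta_\times)$.

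The obstacle is to show that these two limits coincide, i.e. that $\tau_j^*(\theta_\times)=w_j$. For the right region, in the reflected orientation, $\theta_\times$ plays the role of $\bar\theta=1$ with $b=0$. Then $\tau^*=d(\underline\theta)^2-d(1)^2=w$, since $d(1)=0$, and the same holds for the left region by reflection. Both limits therefore equal $-\bar w$, so $U_A$ is continuous and (IC$_2$) holds. This continuity check is exactly why the variance must be shifted by $\bar w-w_j$ rather than left unchanged. The delicate points are the improper-integral limit at the crossing and the orientation bookkeeping under reflection.
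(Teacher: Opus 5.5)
Your proposal follows the paper's proof essentially step for step. Nature plays the regional worst case $F_i^*$ with $w_i=\bar w$, and restricting any $m\in\mathcal M$ to $\Theta_i$ bounds its payoff by $-\bar w$. The pointwise bound $U_{P,m^*}\ge-\bar w$ gives the principal's guarantee. Incentive compatibility reduces, via Lemma~\ref{lem:ic-characterization}, to matching means and truthful utilities at $\theta_\times$: under increasing bias this uses zero variance from accommodation, and under decreasing bias it uses $\tau_j^*(\theta_\times)=w_j$ plus the shift $\bar w-w_j$.

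The one piece you leave unproved is $\underline\theta_2>\theta_\times$ under increasing bias. The paper proves this step with a separate argument, and your heuristic does not settle it. First, there is a slip: what you need near $\theta_\times$ is that $d\le b$ fails, i.e.\ that $d>b$ holds, not that $d>b$ fails. Second, $b\to0$ is not enough by itself, because $d$ can vanish at the crossing as well. For the paper's example $y_P(\theta)=\theta$, $y_A(\theta)=2\theta-1/3$ (Figure~\ref{fig:combined-increasing-bias-asymmetric}), one gets $d=x/2+x\ln\bigl(1/(3x)\bigr)$ with $x=\theta-\theta_\times$.

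What closes the step is a lower bound on $d$ along the calibrated region. Since $(d-b)'=\frac{y_P'}{b}(d-b)-b'$ and $b'\ge0$, once $d\le b$ holds it persists up to $\bar\theta$, and then $d'=\frac{y_P'}{b}(d-b)\le0$ there. So if $d(\theta_n)\le b(\theta_n)$ along a sequence $\theta_n\downarrow\theta_\times$, you would get $0<b(\bar\theta)/2=d(\bar\theta)\le d(\theta_n)\le b(\theta_n)\to0$, a contradiction. This is the same argument as the step ``$b(0)=0$ implies $\underline\theta>0$'' in the proof of Theorem~\ref{thm:positive-bias-saddle}, so you can simply invoke it.

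A minor bookkeeping point concerns decreasing bias. There the left region is already the positive-bias problem, with $\bar\theta=\theta_\times$, and it is the right region that needs reflection. Your conclusion $\tau_j^*(\theta_\times)=w_j$ is unaffected.
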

Theorem~\ref{thm:crossing-bias-saddle} shows how the two regional saddle points combine to solve the original problem. The combined mean function $\mu^*$ follows the regional mean on each side of $\theta_\times$. With an increasing bias, the regional lotteries can be used without adjustment. With a decreasing bias, the principal must add variance in the region with the smaller worst loss to preserve incentive compatibility at the crossing. Against this $m^*$, Nature chooses the CDF from the region with the larger worst loss. If the losses are equal, any mixture of the two regional CDFs can be used. More precisely, for any $\alpha\in[0,1]$: 
\[ F^*(\theta):= 
\begin{cases}
    F_1^*(\theta), & \text{if } w_1> w_2 \\  
    \alpha F_1^*(\theta) +(1-\alpha)F_2^*(\theta), \, & \text{if } w_1=w_2 \\
    F_2^*(\theta), & \text{if } w_1< w_2
\end{cases} 
\]
If the agent's preferences are more responsive than the principal's, the bias is increasing and the two regional mechanisms accommodate the
agent near $\theta_\times$ and can be joined without any adjustment.
The disagreement is small on both sides of the crossing, so assigning
the agent his ideal action keeps the principal's loss below the
respective regional worst loss. At the crossing, both mechanisms
therefore assign the common ideal action with zero variance. The agent's
truthful utility agrees at this point, and the regional means fit into
a nondecreasing global mean. Lemma~\ref{lem:ic-characterization} then
also rules out profitable reports across the crossing. \par

Figure~\ref{fig:combined-increasing-bias-asymmetric} shows the combined mechanism with an increasing bias. The principal assigns the agent his ideal action for states close to $\theta_\times$. On either side of this interval, the mechanism uses calibrated randomization, with pooling at the lowest and highest reports. The mean lies above the agent's ideal action to the left of the accommodation region and below it to the right. In this example, the principal's worst loss is larger on the right of $\theta_\times$, so Nature chooses $F_2^*$ and assigns no probability to the left-hand region.

\begin{figure}[!t]
    \centering
    \includegraphics[width=0.8\linewidth]{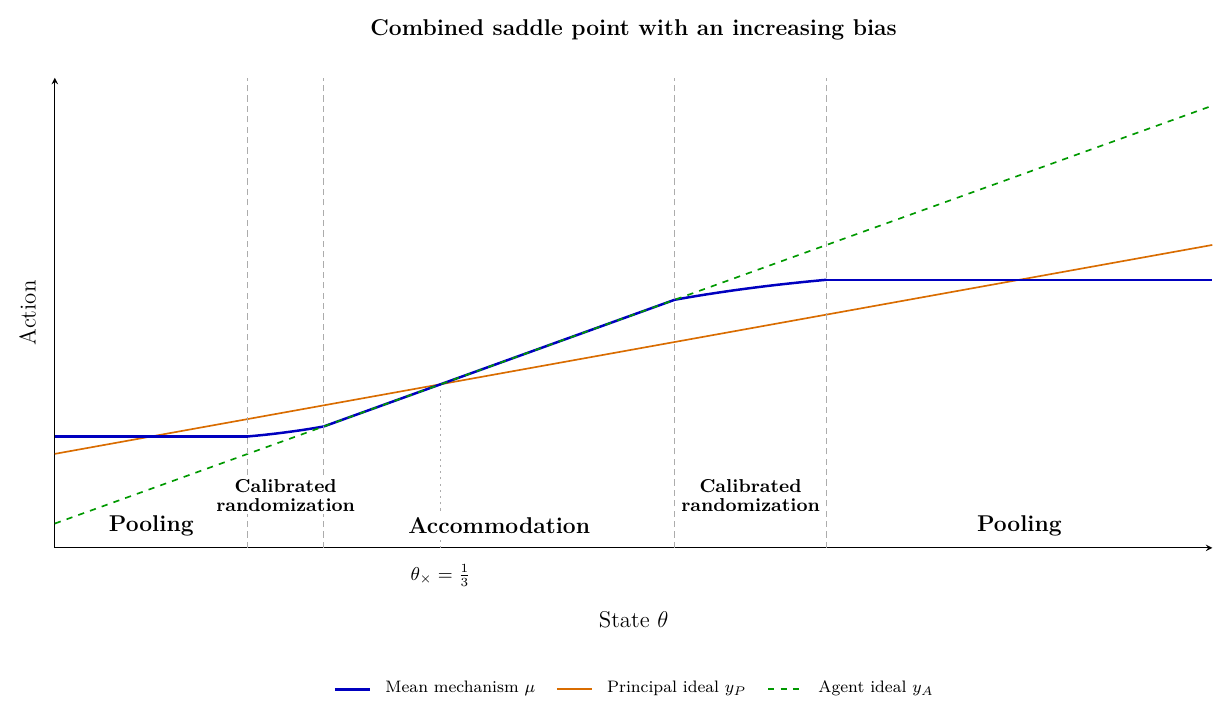}
    \caption{The combined saddle-point mean with an increasing bias,
    for $y_P(\theta)=\theta$ and $y_A(\theta)=2\theta-1/3$.
    The ideal actions cross at $\theta_\times=1/3$.}
    \label{fig:combined-increasing-bias-asymmetric}
\end{figure}

Figure~\ref{fig:combined-decreasing-bias-asymmetric} shows the case of a decreasing bias implying that the agent's preferences are less responsive than the principal's. Joining the unadjusted regional mechanisms need not preserve incentive compatibility. Their means still meet at the common ideal action, but their variances at the crossing are $w_1$ and $w_2$. If these differ, the agent's truthful utility approaches different values from the two sides. A type close to the crossing on the side with more variance could then obtain a similar mean with less risk by reporting a state on the other side. The principal must therefore adjust the lotteries to prevent these deviations.
\begin{figure}[!t]
    \centering
    \includegraphics[width=0.8\linewidth]{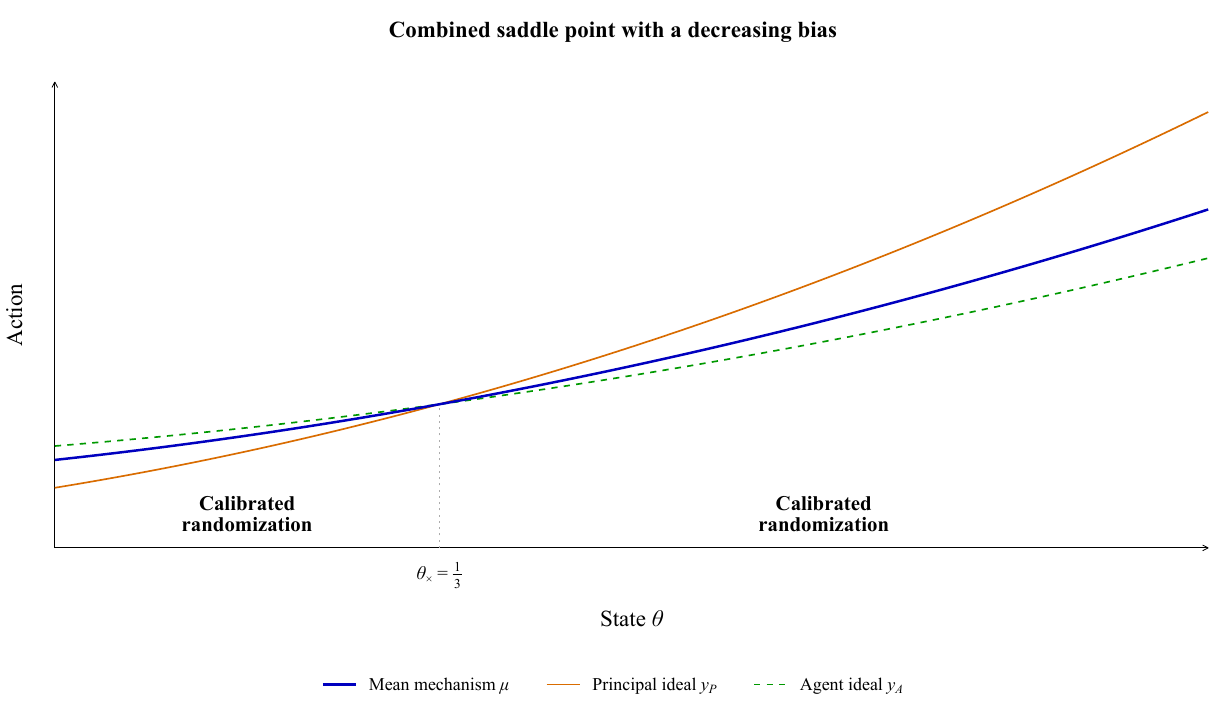}
    \caption{The combined saddle-point mean with a decreasing bias, for $y_P(\theta)=(\theta+\theta^2)/2$ and $y_A(\theta)=(\theta+\theta^2)/4+1/9$. The ideal actions cross at $\theta_\times=1/3$.}
    \label{fig:combined-decreasing-bias-asymmetric}
\end{figure}
We resolve this problem by increasing every variance in region $i$ by $\bar{w}-w_i$, while leaving the mean unchanged. This subtracts the same amount from the agent's utility at every report in that region, so his preferences between those reports do not change. At the crossing, both adjusted variances equal $\bar{w}$, and truthful utility agrees on the two sides. Together with the nondecreasing mean, the regional envelope conditions then give global incentive compatibility. The principal consequently gives up some payoff in the region with the smaller worst loss to preserve incentives between the two regions.

Nature chooses the CDF from the region with the lower saddle-point value. In this region, the mechanism is unchanged, so the principal's payoff is $-\bar{w}$. The principal cannot improve on this: any mechanism that is incentive compatible on the full state space is also feasible for the restricted problem, where $-\bar{w}$ is the optimal value. Nature cannot lower the payoff either, since the combined mechanism gives the principal at least $-\bar{w}$ in every state.
\section{Optimality of Separate Delegation}
\label{sec:multiple-dimensions}

In practice, a principal often delegates several decisions to the same agent. A firm's owner, for example, may delegate investment decisions across several projects to a manager who is better informed about their returns. The principal may in this case benefit from being able to link the actions taken across different decisions. We extend our model to the multidimensional environment. We define the state space as the product of the $n$ one-dimensional state spaces, and similarly for the action space:
\[
\Theta:=\prod_{i=1}^n \Theta_i=[0,1]^n,
\qquad
\mathcal{A}:=\prod_{i=1}^n \mathcal{A}_i.
\]
$\mathcal{F}(\Theta)$ denotes the set of joint distributions on $\Theta$, while a multidimensional mechanism is a Borel probability kernel $M:\Theta\rightarrow\Delta(\mathcal{A})$. The principal knows neither the marginal state distributions nor the dependence between states across dimensions, and maximizes her worst-case payoff across all joint distributions. We write $\theta$ for a state vector and $\theta_i$ for its component in dimension $i$. For the ideal-action mappings $y_j$, we add a subscript $i$ to indicate the dimension and write $y_{j,i}$. We assume that utility is additively separable across dimensions:
\[
u_j(a,\theta)
=\sum_{i=1}^n u_{j,i}(a_i,\theta_i)
=\sum_{i=1}^n -(a_i-y_{j,i}(\theta_i))^2
\quad \text{for } j\in\{A,P\}.
\]
The expected utility from a state report $\hat{\theta}$ at state $\theta$ is denoted by $U_{j,i}^M(\theta,\hat{\theta})$ in dimension $i$ and $U_j^M(\theta,\hat{\theta})$ across all dimensions, so
\[
U_j^M(\theta,\hat{\theta})=\sum_{i=1}^n U_{j,i}^M(\theta,\hat{\theta}).
\]
Similarly, $V_i(M,F)$ denotes the principal's expected utility from mechanism $M$ and distribution $F$ in dimension $i$, so
\[
V(M,F)=\sum_{i=1}^n V_i(M,F).
\]
Multidimensional incentive compatibility requires
\[
U_A^M(\theta,\theta)\ge U_A^M(\theta,\hat{\theta})
\quad \forall \theta,\hat{\theta}\in\Theta.
\]
Let $\mathcal{M}$ denote the set of all incentive-compatible direct mechanisms.

We first assume that each player's ideal-action functions are identical across all dimensions, i.e., $y_{j,i}=y_{j,1}$ for all $i\in\{1,\ldots,n\}$. This is a natural assumption when the agent is responsible for several similar decisions. For example, an owner may use the same investment criterion for several comparable projects, while the manager favors the same degree of overinvestment in each project.

Let $(m_1^*,F_1^*)$ be a saddle point in the common one-dimensional problem, so
\begin{equation}\label{1d_saddle}
V_1(m_1^*,F_1)\ge V_1(m_1^*,F_1^*)\ge V_1(m_1,F_1^*)
\end{equation}
for all distributions $F_1\in\mathcal{F}(\Theta_1)$ and all one-dimensional incentive-compatible mechanisms $m_1$. Set $m_i^*(\hat{\theta}_i)=m_1^*(\hat{\theta}_i)$ for each dimension $i$. In particular, the mean and variance functions satisfy
\[
\mu_{m_i^*}(\hat{\theta}_i)=\mu_{m_1^*}(\hat{\theta}_i),
\qquad
\tau_{m_i^*}(\hat{\theta}_i)=\tau_{m_1^*}(\hat{\theta}_i).
\]
We construct the multidimensional candidate mechanism as the product of these one-dimensional mechanisms:
\[
M^*(\hat{\theta})=\bigotimes_{i=1}^n m_i^*(\hat{\theta}_i).
\]
We refer to this mechanism as \textit{separate delegation} because the lottery assigned in dimension $i$ depends only on the state report in dimension $i$.

\begin{proposition}
\label{prop:multidimensional-saddle}
If each player's ideal-action function is identical across dimensions, then the separate-delegation mechanism $M^*$ is robustly optimal.
\end{proposition}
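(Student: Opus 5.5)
We will establish a saddle point $(M^*,F^*)$ in the multidimensional game. The natural candidate for Nature is the \emph{comonotone} (perfectly correlated) distribution $F^*$: draw $t\sim F_1^*$ and set $\theta=(t,t,\dots,t)$, so every marginal equals $F_1^*$. The first saddle inequality is immediate from separability. For any joint $F$ with marginals $F_i$, the product structure of $M^*$ gives $V_i(M^*,F)=V_1(m_1^*,F_i)$. By \eqref{1d_saddle}, each term is at least $V_1(m_1^*,F_1^*)$, so $V(M^*,F)\ge nV_1(m_1^*,F_1^*)=V(M^*,F^*)$. Incentive compatibility of $M^*$ is also immediate: the agent's utility is a sum of one-dimensional terms, each depending only on $(\theta_i,\hat\theta_i)$, and each is maximized at $\hat\theta_i=\theta_i$ by one-dimensional incentive compatibility of $m_1^*$.

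The substantive step is the second inequality: $V(M,F^*)\le nV_1(m_1^*,F_1^*)$ for every multidimensional IC mechanism $M$. Under $F^*$ only diagonal states $(t,\dots,t)$ matter. The plan is to collapse $M$ to a one-dimensional mechanism by symmetrization. Given $M$, define the one-dimensional kernel $\bar m(\hat t)$ that draws a dimension $i$ uniformly at random from $\{1,\dots,n\}$ and then draws $a_i$ from the $i$-th marginal of $M(\hat t,\dots,\hat t)$.

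Because $y_{j,i}=y_{j,1}$ for all $i$, the diagonal payoffs satisfy
\[
U_{j,1}^{\bar m}(t,\hat t)=\frac1n\sum_{i=1}^n U_{j,i}^M\bigl((t,\dots,t),(\hat t,\dots,\hat t)\bigr)=\frac1n U_j^M\bigl((t,\dots,t),(\hat t,\dots,\hat t)\bigr).
\]
We then need to check that $\bar m$ is one-dimensionally IC, and this is where the identical-function assumption is used. Multidimensional IC of $M$ applied to the diagonal type $(t,\dots,t)$, restricted to diagonal misreports $(\hat t,\dots,\hat t)$, gives $U_A^M$ at truth $\ge$ $U_A^M$ at the misreport. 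Dividing by $n$ yields exactly the IC inequality for $\bar m$. We only need deviations to diagonal reports, which are a subset of the available ones, so the implication goes in the right direction.

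Applying the same averaging to the principal gives $V(M,F^*)=nV_1(\bar m,F_1^*)\le nV_1(m_1^*,F_1^*)$ by \eqref{1d_saddle}. This completes the saddle point, and robust optimality follows as in Section~\ref{subsec:ic-saddle}.

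The main obstacle I expect is technical rather than conceptual. We must confirm that the averaged kernel $\bar m$ is a well-defined Borel kernel into $\Delta(\mathcal{A}_1)$, which requires $\mathcal{A}_i=\mathcal{A}_1$ or else working in the convex hull of the action spaces. We must also confirm that \eqref{1d_saddle} applies to arbitrary IC kernels, not just those with some regularity. If the action spaces differ across dimensions, I would first replace $\mathcal{A}_1$ by $\bigcup_i\mathcal{A}_i$, using the standing assumption that action spaces are large enough. A second subtlety is the tie-breaking convention (truthful reporting when indifferent), which transfers directly because the inequalities are weak.
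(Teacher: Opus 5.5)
Your proof is correct and follows the paper's argument. Separate delegation is incentive compatible and guarantees $nV_1(m_1^*,F_1^*)$ by additivity. Against Nature's diagonal distribution, any IC mechanism $M$ collapses, via a mixture of its marginals at diagonal reports, into a one-dimensional IC mechanism to which \eqref{1d_saddle} applies. The paper actually proves the more general version under Assumption~\ref{assumption1}, with mixture weights $q_i^2/\sum_{k}q_k^2$ and rescaled actions; in the identical case these reduce exactly to your uniform weights $1/n$.
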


Proposition~\ref{prop:multidimensional-saddle} shows that the mechanism $M^*$ achieves the highest payoff guarantee available to the principal, even when she can use mechanisms that condition each allocation on the entire vector of reports. Because the mechanism treats each dimension separately, the principal guarantees herself the sum of the one-dimensional saddle-point payoffs. This guarantee holds for every joint distribution of states, regardless of the dependence across dimensions. Separate delegation therefore provides a lower bound on the principal's optimal payoff in the multidimensional problem. Additionally, separate delegation takes a simple form: the principal applies the same one-dimensional mechanism to each decision, and the lottery assigned to a decision depends only on the report for that decision. 

To show that the principal cannot improve on this guarantee, we construct a worst-case distribution supported on the diagonal. Nature draws a state $t$ from the one-dimensional saddle-point distribution $F_1^*$ and sets $\theta_i=t$ in every dimension. The resulting joint distribution is
\[
F^*(\theta_1,\ldots,\theta_n)
=\int_{\Theta_1}\mathbb{I}_{\{t\leq\theta_1,\ldots,t\leq\theta_n\}}\,dF_1^*(t).
\]
Under this distribution, each player has the same ideal action across dimensions. The incentives generated by any incentive-compatible multidimensional mechanism can therefore be represented by a one-dimensional randomized mechanism with the same average payoff across dimensions. Such a mechanism cannot improve on the one-dimensional saddle-point payoff against $F_1^*$. No incentive-compatible mechanism can therefore give the principal more than the sum of the one-dimensional saddle-point payoffs against $F^*$. The proposed pair $(M^*,F^*)$ thus forms a saddle point.

Separate delegation remains optimal for a broader class of preferences. Ideal-action functions may differ across dimensions through a relabeling of states and a common shift and rescaling of both players' actions. The following assumption describes these transformations.

\begin{assumption}\label{assumption1}
For every $i\in\{1,\ldots,n\}$ and $j\in\{A,P\}$,
\[
y_{j,i}(\theta_i)=q_i\,y_{j,1}(\psi_i(\theta_i))+r_i,
\]
where $q_i,r_i\in\mathbb{R}$ and $\psi_i:\Theta_i\rightarrow\Theta_1$ is a $C^1$-diffeomorphism. Further, $q_i\psi_i^\prime(\theta_i)>0$ for all $\theta_i\in\Theta_i$ and $\mathcal{A}_i=q_i\mathcal{A}_1+r_i$. We normalize $(q_1,r_1)=(1,0)$ and take $\psi_1$ to be the identity mapping.
\end{assumption}

As a diffeomorphism, $\psi_i$ is differentiable and monotone. Intuitively, $\psi_i(\theta_i)$ relabels the state space, while $a\mapsto q_i a+r_i$ shifts and rescales actions. The bias therefore satisfies
\[
b_i(\theta_i)=q_i b_1(\psi_i(\theta_i)).
\]
For example, reflect the state space by setting
\[
\psi_i(\theta_i)=1-\theta_i,
\qquad
y_{j,i}(\theta_i)=q_i y_{j,1}(1-\theta_i)+r_i,
\qquad q_i<0.
\]
If, in addition, $r_i=1$, $q_i=-1$, and $y_{P,1}(\theta_1)=\theta_1$, then $y_{P,i}(\theta_i)=\theta_i$ and
\[
y_{A,i}(\theta_i)=1-y_{A,1}(1-\theta_i).
\]
Thus, $y_{A,i}$ is obtained by jointly reflecting the state and action spaces. For example, one could have $y_{A,1}(\theta_1)=\theta_1+\lambda$ and $y_{A,2}(\theta_2)=\theta_2-\lambda$, or $y_{A,1}(\theta_1)=2\theta_1$ and $y_{A,2}(\theta_2)=2\theta_2-1$. The extension therefore allows the agent's bias to have opposite signs across dimensions.\footnote{Fixing $y_{P,i}(\theta_i)=\theta_i$ for all $i\in\{1,\ldots,n\}$ illustrates the restrictions imposed by the assumption: only this joint reflection of state and action spaces ($\psi_i(\theta_i)=1-\theta_i$, $q_i=-1$, $r_i=1$) or the identity transformation ($\psi_i(\theta_i)=\theta_i$, $q_i=1$, $r_i=0$) is feasible. In particular, if the bias is constant, so that $y_{A,i}(\theta_i)=\theta_i+\lambda_i$, its absolute value must be identical across dimensions. Differentiating the bias gives
\[
b_i^\prime(\theta_i)=q_i\psi_i^\prime(\theta_i)b_1^\prime(\psi_i(\theta_i)).
\]
Since $q_i\psi_i^\prime(\theta_i)>0$, the sign of $b_i^\prime(\theta_i)$ is determined by the sign of $b_1^\prime(\psi_i(\theta_i))$. As $b_1$ is monotone by assumption, the transformation preserves the direction of monotonicity of the bias.}

Under Assumption~\ref{assumption1}, the utility functions can be rewritten as
\begin{equation}\label{prop1_util_rewritten}
u_{j,i}(a_i,\theta_i)
=-q_i^2\left(\frac{a_i-r_i}{q_i}-y_{j,1}(\psi_i(\theta_i))\right)^2.
\end{equation}
Let $(m_1^*,F_1^*)$ be a saddle point in the reference dimension. For each dimension $i$, define
\[
m_i^*(\hat{\theta}_i):=q_i\,m_1^*(\psi_i(\hat{\theta}_i))+r_i,
\]
where the right-hand side denotes the affine transformation of the lottery assigned by $m_1^*$. The mean and variance functions are
\[
\mu_{m_i^*}(\hat{\theta}_i)=q_i\mu_{m_1^*}(\psi_i(\hat{\theta}_i))+r_i,
\qquad
\tau_{m_i^*}(\hat{\theta}_i)=q_i^2\tau_{m_1^*}(\psi_i(\hat{\theta}_i)).
\]
As before, the candidate mechanism is the product of these one-dimensional mechanisms:
\[
M^*(\hat{\theta})=\bigotimes_{i=1}^n m_i^*(\hat{\theta}_i).
\]
We call the set of vectors $(\theta_1,\psi_2^{-1}(\theta_1),\ldots,\psi_n^{-1}(\theta_1))$, with $\theta_1\in\Theta_1$, the \textit{reparametrized diagonal}. Define the candidate distribution by
\[
F^*(\theta_1,\ldots,\theta_n)
=\int_{\Theta_1}\mathbb{I}_{\{t\leq\theta_1,\psi_2^{-1}(t)\leq\theta_2,\ldots,\psi_n^{-1}(t)\leq\theta_n\}}\,dF_1^*(t).
\]

The construction follows the same logic as in the identical case. The principal applies the appropriately transformed one-dimensional mechanism separately in each dimension. Nature draws a reference state according to $F_1^*$ and assigns the corresponding state to each dimension using the inverse transformations. After normalizing actions, each dimension presents the same one-dimensional incentive problem, with utility scaled by $q_i^2$, as shown in equation~\eqref{prop1_util_rewritten}. Separate delegation guarantees the sum of the one-dimensional saddle-point payoffs, and the proposed distribution prevents the principal from improving on this guarantee by linking dimensions. The pair $(M^*,F^*)$ therefore again forms a saddle point, establishing the robust optimality of separate delegation under Assumption~\ref{assumption1}.

If Assumption~\ref{assumption1} does not hold, Nature may no longer be able to align the incentive problems across dimensions in this way. A mechanism that links dimensions might then improve on the principal's guarantee from treating decisions separately.

\section{Conclusion}
We consider a delegation problem in which the principal does not know the distribution from which the underlying state is drawn. The principal can choose a general randomized mechanism and maximizes her worst-case expected payoff over all state distributions. Our main result characterizes a robustly optimal mechanism for a large class of quadratic preferences.
When the bias does not cross zero, the mechanism has up to three regions: accommodation, where the agent’s ideal action is taken; calibrated randomization, where each type receives a distinct lottery; and pooling, where all types receive the same lottery.
For a positive bias, in the calibrated randomization region, the mean allocation moves closer to the principal’s ideal action as the state increases. The resulting gain exactly offsets the additional variance required to preserve incentives, keeping her payoff constant throughout the region.

We then consider preferences that cross. We split the state space at the point of agreement and characterize a saddle point for each of the two resulting problems. With increasing bias, both mechanisms accommodate the agent near the crossing and can be joined directly. With decreasing bias, we add variance in the region with the smaller worst-case loss to preserve incentives across the crossing. The principal may therefore optimally assign positive variance even where her preferred action coincides with the agent’s.
We also extend our model to a multidimensional setting with additively separable preferences. When each player’s ideal-action functions are transformations of that player’s own reference function, with the same smooth relabeling of states and affine transformation of actions applied to both players in each dimension, we characterize a robustly optimal mechanism and show that separate delegation is optimal.
\appendix
\printbibliography
\section{Proofs of the Main Results}
\label{app:main-results}
\subsection{Proof of Lemma~\ref{lem:ic-characterization}}
\begin{proof}
For a mechanism $m$ with mean $\mu$ and variance $\tau$, write
\[
    U_A(\theta,\hat\theta)
    :=-\tau(\hat\theta)-[\mu(\hat\theta)-y_A(\theta)]^2,
    \qquad
    U_A(\theta)=U_A(\theta,\theta).
\]
Suppose first that $m$ is incentive compatible. For any
$\theta,\hat\theta\in\Theta$, the incentive constraints give
\begin{align*}
    -[\mu(\theta)-y_A(\theta)]^2-\tau(\theta)
    &\geq-[\mu(\hat\theta)-y_A(\theta)]^2-\tau(\hat\theta),\\
    -[\mu(\hat\theta)-y_A(\hat\theta)]^2-\tau(\hat\theta)
    &\geq-[\mu(\theta)-y_A(\hat\theta)]^2-\tau(\theta).
\end{align*}
Adding the two inequalities yields
\[
    [\mu(\theta)-\mu(\hat\theta)]
    [y_A(\theta)-y_A(\hat\theta)]\geq0.
\]
Since $y_A$ is strictly increasing, this condition implies that the mechanism's mean, $\mu$, is weakly increasing, giving $(\mathrm{IC}_1)$.

The derivative of the agent's utility at state $\theta$ when reporting $\hat\theta$, with respect to $\theta$, equals
\[
    \frac{\partial U_A(\theta,\hat\theta)}{\partial\theta}
    =2[\mu(\hat\theta)-y_A(\theta)]y_A'(\theta).
\]
This derivative is uniformly bounded, since $\mu$ is bounded and $y_A$ is continuously differentiable on $\Theta$. Moreover, utility is continuously differentiable, hence absolutely continuous, in the
state for every fixed report. Under incentive compatibility, truthful reporting maximizes $U_A(\theta,\cdot)$, so the envelope theorem of
\textcite{milgrom2002envelope} applies:
\begin{equation}
    U_A(\theta)
    =U_A(0)+\int_0^\theta
    \left.
    \frac{\partial U_A(s,\hat\theta)}{\partial s}
    \right|_{\hat\theta=s}\,ds
    \qquad\text{for every }\theta\in\Theta.
    \label{eq:app-ic-envelope}
\end{equation}
Condition $(\mathrm{IC}_2)$ follows by substituting the derivative and subtracting the identity at $\theta_1$ from the identity at $\theta_2$. By feasibility, the variance
\[
    \tau(\hat\theta)
    =\int_{\mathcal A}[a-\mu(\hat\theta)]^2\,dm(a\mid\hat\theta)
\]
must be nonnegative, giving $(\mathrm{VAR})$.

For sufficiency, suppose $(\mathrm{IC}_1)$, $(\mathrm{IC}_2)$, and $(\mathrm{VAR})$ hold. Condition $(\mathrm{IC}_2)$ gives the difference between truthful utilities at $\theta$ and $\hat\theta$. Combining this identity with the utility decomposition and simplifying
yields
\[
    U_A(\theta)-U_A(\theta,\hat\theta)
    =2\int_{\hat\theta}^{\theta}
    [\mu(s)-\mu(\hat\theta)]y_A'(s)\,ds
    \qquad\text{for every }\theta,\hat\theta\in\Theta.
\]
The right-hand side is weakly positive by $(\mathrm{IC}_1)$ and $y_A'\geq0$. If $\theta\geq\hat\theta$, the integrand is
nonnegative; if $\theta<\hat\theta$, the integrand is nonpositive and the limits of integration are reversed. Hence, $m$ is incentive compatible.
\end{proof}
\subsection{Proof of Proposition~\ref{prop:complete-pooling}}
\begin{proof}
The mechanism $m^*$ is incentive compatible because it assigns the same action after every report. Since $y_P(0)\leq y_P(\theta)\leq y_P(1)$, the principal's statewise utility is at least $-[y_P(1)-y_P(0)]^2/4$, with equality at $0$ and $1$. Thus, for $F^*$ assigning probability $1/2$ to each endpoint,
\[
    V(m^*,F)\geq V(m^*,F^*)
    =-\frac{[y_P(1)-y_P(0)]^2}{4}
    \qquad\text{for every }F\in\mathcal F(\Theta).
\]

It remains to show that no incentive-compatible  mechanism gives the principal a higher expected utility under $F^*$. Take any $m\in\mathcal M$, with mean $\mu$ and variance $\tau$. The agent's incentive constraint at state $0$ against report $1$ gives
\[
    -\tau(0)-[\mu(0)-y_A(0)]^2
    \geq-\tau(1)-[\mu(1)-y_A(0)]^2.
\]
Expanding the squares and rearranging yields
\[
    \tau(1)+\mu(1)^2-\tau(0)-\mu(0)^2
    \geq2y_A(0)[\mu(1)-\mu(0)].
\]
By Lemma~\ref{lem:ic-characterization}, $\mu(1)\geq\mu(0)$.
Using the preceding inequality and $y_A(0)\geq y_P(1)$, we can compare the principal's utility at state $1$ from the lotteries assigned to reports $0$ and $1$:
\begin{align*}
    U_P(1,0)-U_P(1)
    &=\tau(1)+\mu(1)^2-\tau(0)-\mu(0)^2
      -2y_P(1)[\mu(1)-\mu(0)]\\
    &\geq2[y_A(0)-y_P(1)][\mu(1)-\mu(0)]\\
    &\geq0.
\end{align*}

The principal's expected utility under $F^*$ is therefore bounded above by the payoff from using the report-$0$ lottery at both endpoints. Completing the square gives
\begin{align*}
    V(m,F^*)
    &=\tfrac12 U_P(0)+\tfrac12 U_P(1)\\
    &\leq\tfrac12 U_P(0)+\tfrac12 U_P(1,0)\\
    &=-\tau(0)-\frac12\left[
      [\mu(0)-y_P(0)]^2+[\mu(0)-y_P(1)]^2\right]\\
    &=-\frac{[y_P(1)-y_P(0)]^2}{4}-\tau(0)
      -\left[\mu(0)-\frac{y_P(0)+y_P(1)}2\right]^2\\
    &\leq-\frac{[y_P(1)-y_P(0)]^2}{4}=V(m^*,F^*).
\end{align*}
The last inequality follows from $\tau(0)\geq0$: neither a positive variance nor a mean different from the midpoint can improve this bound. Together with Nature's best response established above, this shows that $(m^*,F^*)$ is a saddle point.
\end{proof}

\subsection{Lemma \ref{lem:forward_bias_lemma} and Its Proof}

For $m\in\mathcal M$ and $F\in\mathcal F(\Theta)$, let
\[
    \theta_\diamond:=\min\operatorname{supp}(F),
    \qquad
    \theta^\diamond:=\max\operatorname{supp}(F).
\]
Define the forward bias induced by $F$ as
\[
    S_F(\theta)
    :=\int_{(\theta,\theta^\diamond]}
    [y_A(\theta)-y_P(t)]\,dF(t).
\]
Next, introduce the left-continuous versions of the mean and variance above $\theta_\diamond$:
\[
    (\mu_m^L(\theta),\tau_m^L(\theta))
    :=
    \begin{cases}
        (\mu_m(\theta),\tau_m(\theta)),
        & \theta\in[0,\theta_\diamond],\\
        (\mu_m(\theta^-),\tau_m(\theta^-)),
        & \theta\in(\theta_\diamond,1],
    \end{cases}
\]
where $\theta^-$ denotes the left limit. In particular, we have then
\[ d\mu_m^L([\theta_1,\theta_2))=\mu_m^L(\theta_2)-\mu_m^L(\theta_1), \quad  \theta_\diamond \leq\theta_1 \leq \theta_2\leq1 . \]
Let
\[
    \Gamma
    :=\left\{
        \theta\in(\theta_\diamond,\theta^\diamond]:
        \mu_m(\theta)>\mu_m(\theta^-)
        \text{ and } F(\theta)>F(\theta^-)
    \right\}
\]
be the set of common left-discontinuities of $\mu_m$ and $F$. This set is countable since both functions are nondecreasing.

\begin{lemma}
\label{lem:forward_bias_lemma}
We can write the principal's utility from the pair $(m,F)$ in the following form:
\begin{align*}
    V(m,F)= & -(\mathbb{E}_F[y_P(\theta)]-\mu_m(\theta_\diamond))^2 - \tau_m(\theta_\diamond)-\text{Var}_F[y_P(\theta)] -2\int_{[\theta_\diamond,\theta^\diamond]} S_F(\theta) d\mu_m^L(\theta) \\ & - 2 \sum_{\theta \in \Gamma } b(\theta) (F(\theta)-F(\theta^-))(\mu_m(\theta)-\mu_m(\theta^-))
\end{align*}    
\end{lemma}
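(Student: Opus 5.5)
We start from the utility decomposition \eqref{util_decomp}, so that
\[
V(m,F)=-\int_{[\theta_\diamond,\theta^\diamond]}\Bigl(\tau_m(\theta)+\bigl(\mu_m(\theta)-y_P(\theta)\bigr)^2\Bigr)\,dF(\theta).
\]
The variance term is then eliminated using the envelope condition. By Lemma~\ref{lem:ic-characterization}, $U_A(\theta)=-\tau_m(\theta)-(\mu_m(\theta)-y_A(\theta))^2$ satisfies $(\mathrm{IC}_2)$. Anchoring at $\theta_\diamond$, we get
\[
-\tau_m(\theta)=-\tau_m(\theta_\diamond)-(\mu_m(\theta_\diamond)-y_A(\theta_\diamond))^2+(\mu_m(\theta)-y_A(\theta))^2+2\int_{\theta_\diamond}^{\theta}(\mu_m(t)-y_A(t))y_A'(t)\,dt.
\]
Next we integrate by parts in the last integral, writing $\mu_m\,dy_A=d(\mu_m y_A)-y_A\,d\mu_m$ in the Stieltjes sense. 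Together with $\int (-y_A)\,dy_A=-\tfrac12 d(y_A^2)$, this collapses the expression to
\[
-\tau_m(\theta)=-\tau_m(\theta_\diamond)+\mu_m(\theta)^2-\mu_m(\theta_\diamond)^2-2\int_{(\theta_\diamond,\theta]}y_A(s)\,d\mu_m(s).
\]
This step uses only that $y_A$ is $C^1$ and $\mu_m$ is monotone, hence of bounded variation. Substituting into $U_P=-\tau_m-(\mu_m-y_P)^2$ gives the pointwise identity
\[
U_P(\theta)=-\tau_m(\theta_\diamond)-\mu_m(\theta_\diamond)^2+2\mu_m(\theta)y_P(\theta)-y_P(\theta)^2-2\int_{(\theta_\diamond,\theta]}y_A\,d\mu_m.
\]

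Now write $2\mu_m(\theta)y_P(\theta)=2\mu_m(\theta_\diamond)y_P(\theta)+2y_P(\theta)\int_{(\theta_\diamond,\theta]}d\mu_m$ and integrate against $F$. The terms that do not involve $d\mu_m$ give
\[
-\tau_m(\theta_\diamond)-\mu_m(\theta_\diamond)^2+2\mu_m(\theta_\diamond)\mathbb E_F[y_P]-\mathbb E_F[y_P^2].
\]
Completing the square, this equals $-(\mathbb E_F[y_P]-\mu_m(\theta_\diamond))^2-\tau_m(\theta_\diamond)-\mathrm{Var}_F[y_P]$, which is the first line of the claim. The remaining term is a double integral,
\[
2\int_{[\theta_\diamond,\theta^\diamond]}\int_{(\theta_\diamond,\theta]}\bigl(y_P(\theta)-y_A(s)\bigr)\,d\mu_m(s)\,dF(\theta).
\]
We apply Fubini, swapping to integrate over $\theta\geq s$ for each jump location $s$ of $\mu_m$.

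The main obstacle is the bookkeeping on the diagonal $\theta=s$. The region $\{\theta\ge s\}$ includes $\theta=s$, whereas $S_F(s)$ integrates over $t>s$. Moreover, $d\mu_m$ charges the point $s$ through the right jump $\mu_m(s^+)-\mu_m(s)$ as well as the left jump $\mu_m(s)-\mu_m(s^-)$. To handle this, we reparametrize with the left-continuous version $\mu_m^L$. Its measure $d\mu_m^L([\theta_1,\theta_2))$ assigns the right jump at $s$ and the left jump at $s^+$ together to the set $\{\theta>s\}$, which matches the strict inequality in $S_F$. After the swap, the diffuse and right-jump parts of $d\mu_m$ pair exactly with $\int_{(s,\theta^\diamond]}(y_P(t)-y_A(s))\,dF(t)=-S_F(s)$, giving $-2\int S_F\,d\mu_m^L$. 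The only contributions not captured this way come from atoms where $F$ and the left jump of $\mu_m$ coincide, i.e.\ points $s\in\Gamma$. At such a point the mass $F(s)-F(s^-)$ at $\theta=s$ is paired with the left jump $\mu_m(s)-\mu_m(s^-)$ and integrand $y_P(s)-y_A(s)=-b(s)$. These pairs produce exactly the sum $-2\sum_{\Gamma}b(s)(F(s)-F(s^-))(\mu_m(s)-\mu_m(s^-))$.

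I would verify this diagonal accounting carefully by treating atoms and the continuous part separately. The boundary point $\theta_\diamond$ needs its own check, since there $\mu_m^L=\mu_m$ and no jump at $\theta_\diamond$ enters. I would also confirm that Fubini is legitimate: all integrands are bounded, and both measures are finite because $\mu_m$ is monotone and bounded on the compact $\mathcal A$.
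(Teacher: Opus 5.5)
Your argument is correct and is essentially the paper's: you use $(\mathrm{IC}_2)$ to replace $\mu_m^2+\tau_m$ by $\int 2y_A\,d\mu_m$, exchange the order of integration against $F$ to produce $S_F$, and complete the square. The only difference is how common jumps are handled: the paper first passes to the left-continuous mechanism and uses $U_{P,m}-U_{P,m^L}=-2b(\theta)(\mu_m(\theta)-\mu_m^L(\theta))$ to peel off the $\Gamma$ sum before integrating by parts, whereas you keep the raw $\mu_m$ and extract that sum from the diagonal $\theta=s$ in the Fubini swap. One point to state precisely when you write it out: the left jumps of $\mu_m$ also feed into $-2\int S_F\,d\mu_m^L$ through their pairing with $(s,\theta^\diamond]$, so that term does not come only from the diffuse and right-jump parts.
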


\begin{proof}  
The Lebesgue-Stieltjes integral $\int_{[\theta_\diamond,\theta^\diamond]} S_F(\theta)d\mu_m^L(\theta)$ is well-defined and finite since $S_F(\theta)$ is clearly Borel-measurable as well as bounded and $\mu_m^L$ is monotone by IC from Lemma \ref{lem:ic-characterization}. Since the agent's truthful statewise utility stays the same, the difference between the principal's statewise utility under mechanism $m$ and its left-continuous transformation $m^L$ equals: 
\[ U_{P,m}-U_{P,m^L}=-2b(\theta)\left( \mu_m(\theta)-\mu_m^L(\theta) \right) \]

The principal's utility from a mechanism $m$ can be decomposed as follows utilizing equation \eqref{util_decomp} (for notational convenience, we will drop the $m$ subscript for $\mu, \tau$ within this proof): 
\begin{equation} \label{l1_expand}
\begin{aligned}
    V(m,F)  = & - \int_{[\theta_\diamond,\theta^\diamond]} (\mu(\theta)-y_P(\theta))^2 +\tau(\theta) \, dF(\theta)  \\
      = & - \int_{[\theta_\diamond,\theta^\diamond]} y_P(\theta)^2 \, dF(\theta) + \int_{[\theta_\diamond,\theta^\diamond]} 2\mu^L(\theta)y_P(\theta) \, dF(\theta) \\ & -\int_{[\theta_\diamond,\theta^\diamond]} \mu^L(\theta)^2 +\tau^L(\theta) \, dF(\theta) 
     - 2 \sum_{\theta \in \Gamma} b(\theta)(F(\theta)-F(\theta^-))(\mu(\theta)-\mu(\theta^-))
     \end{aligned} \end{equation}
We will first apply integration by parts for Lebesgue-Stieltjes integrals to the second term in the middle line; specifically, we apply it to  the terms $\mu^L(\theta)$ and $-\int_{(\theta,\theta^\diamond]} 2y_P(t) \, dF(t)$. Since $\mu^L(\theta)$ is left-continuous, this yields: 
\begin{equation*}
   \int_{[\theta_\diamond,\theta^\diamond]} 2\mu^L(\theta)y_P(\theta) \, dF(\theta)=2\mu(\theta_\diamond) \mathbb{E}_F[y_P(\theta)] + \int_{[\theta_\diamond,\theta^\diamond]} \int_{(\theta,\theta^\diamond]} 2y_P(t) \, dF(t) \, d\mu^L(\theta)
\end{equation*}
From the $(IC_2)$-condition of Lemma \ref{lem:ic-characterization}, we know that:
\begin{equation*}
    d U_{A,m}(\theta) = 2 y_A^\prime(\theta)(\mu^L(\theta)-y_A(\theta)) \, d\theta
\end{equation*}
Next, we differentiate the agent's utility from a mechanism $m$ from truth-telling, which gives us: 
\begin{equation*}
    d U_{A,m}(\theta,\theta)= -d\left( (\mu^L(\theta))^2 + \tau^L(\theta) \right) +2\mu^L(\theta)y_A^\prime(\theta) \, d\theta +2 y_A(\theta) \, d\mu^L(\theta) - 2y_A(\theta)y_A^\prime(\theta) \, d\theta
\end{equation*}
Equating the two equations yields:
\begin{equation}\label{l1_IC}
    d\left((\mu^L(\theta))^2+\tau^L(\theta)\right)=2y_A(\theta) \, d \mu^L(\theta)
\end{equation}
Now, we go back to the third term in equation \eqref{l1_expand}, apply integration-by-parts to $\mu^L(\theta)^2+\tau^L(\theta)$ and $-(1-F(\theta))$, and rewrite the integral based on equation \eqref{l1_IC}: 
\begin{equation*}
   \int_{[\theta_\diamond,\theta^\diamond]} \mu^L(\theta)^2 +\tau^L(\theta) \, dF(\theta) = \mu(\theta_\diamond)^2 + \tau(\theta_\diamond) + \int_{[\theta_\diamond,\theta^\diamond]} (1-F(\theta)) 2y_A(\theta) \, d\mu^L(\theta)
\end{equation*}
Combining these identities and completing the square gives: 
\begin{equation*} \begin{aligned}
    V(m,F)= & -(\mathbb{E}_F[y_P(\theta)]-\mu(\theta_\diamond))^2 -\tau(\theta_\diamond) - \text{Var}_F[y_P(\theta)] - 2\int_{[\theta_\diamond,\theta^\diamond]} S_F(\theta) \, d \mu^L(\theta) - \\ & 2 \sum_{\theta \in \Gamma} b(\theta)(F(\theta)-F(\theta^-))(\mu(\theta)-\mu(\theta^-))
    \end{aligned}
\end{equation*}
\end{proof}  

\subsection{Proof of Theorem \ref{thm:positive-bias-saddle}}
\begin{proof}
We will start by showing that $m^*$ is incentive compatible. The candidate variance on $[\underline\theta,\bar\theta]$ is
\[
    \tau^*(\theta)=d(\underline\theta)^2-d(\theta)^2,
\]
with zero variance below $\underline\theta$ and constant variance $\tau^*(\bar\theta)$ above $\bar\theta$.

First, we will show that the induced mean and variance functions are continuous. If $\underline\theta>0$, continuity and the definition of $\underline\theta$ as an infimum imply $d(\underline\theta)=b(\underline\theta)$. Hence
$\mu^*(\underline\theta)=y_A(\underline\theta)$ and
$\tau^*(\underline\theta)=0$, so both functions match at the lower cutoff. If the pooling region is nonempty, the definitions of
$d$ and $\bar\theta$ give
\[
    \mu^*(\bar\theta)
    =y_P(\bar\theta)+\frac{b(\bar\theta)}2
    =\frac{y_P(\bar\theta)+y_P(1)}2.
\]
The variance also matches by construction. If $b(1)=0$, then $\bar\theta=1$, and the integral representation of $d$ gives $0\leq d(\theta)\leq y_P(1)-y_P(\theta)\to0=d(1)$. Thus both functions are continuous at this endpoint as well.

To show that $(\mathrm{IC}_1)$ holds, we only need to show that $\mu^*$ is weakly increasing on $[\underline\theta,\bar\theta]$, as the remaining regions follow from its definition. The integral
representation of $d(\theta)$ gives $d(\theta)>0$ for
$0<\theta<\bar\theta$, since $y_P$ is strictly increasing. Differentiating that representation wherever $b(\theta)>0$ gives
\[
    d'(\theta)=\frac{y_P'(\theta)}{b(\theta)}
        [d(\theta)-b(\theta)],
    \qquad
    (\mu^*)'(\theta)
        =\frac{y_P'(\theta)}{b(\theta)}d(\theta)\geq0.
\]
Together with continuity at the cutoffs, this proves
$(\mathrm{IC}_1)$.

To show that $(\mathrm{VAR})$ holds, we only need to show that $d(\theta)$ is nonincreasing on $[\underline\theta,\bar\theta]$. By the preceding differential equation, this reduces to showing
$d(\theta)\leq b(\theta)$. If the bias is nondecreasing, then
\[
    (d-b)'(\theta)
    -\frac{y_P'(\theta)}{b(\theta)}[d(\theta)-b(\theta)]
    =-b'(\theta)\leq0.
\]
Multiplying by a positive integrating factor shows that $d\leq b$ continues to hold after any state in the set defining $\underline\theta$. The infimum definition therefore gives $d\leq b$ throughout the calibrated region. If the bias is nonincreasing, $d(\theta)>b(\theta)$ would imply $d'\geq0$
while $b'\leq0$. The difference would then remain strictly positive up to $\bar\theta$, contradicting
$d(\bar\theta)=b(\bar\theta)/2$. Thus $d'\leq0$ wherever $b>0$ on the calibrated region. If $b(0)=0$, this also implies $\underline\theta>0$; otherwise
$0<d(\bar\theta)\leq d(\theta)\leq b(\theta)\to0$ as
$\theta\downarrow0$ would be a contradiction. Hence
$b(\underline\theta)>0$, and continuity covers the possible zero-bias upper endpoint. Consequently, the calibrated variance is nonnegative, as are the zero accommodation variance and the constant pooling variance $\tau^*(\bar\theta)$.

Lastly, we need to prove that $(\mathrm{IC}_2)$ holds. The bounds $0\leq d\leq b$ give $0\leq(\mu^*)'\leq y_P'$ on the calibrated region. Together with continuity and the variance formula, this implies that both allocation functions are absolutely continuous. We can therefore go back to the differentiated envelope condition:
\[
    \frac{d}{d\theta}
    \left[-[\mu^*(\theta)-y_A(\theta)]^2-\tau^*(\theta)\right]
    =2[\mu^*(\theta)-y_A(\theta)]y_A'(\theta).
\]
Within $[\underline\theta,\bar\theta]$, this holds by construction from the variance formula and the differential equation for $d(\theta)$. Within $[0,\underline\theta)$, we have $(\tau^*)'(\theta)=0$
and $\mu^*(\theta)=y_A(\theta)$. In $(\bar\theta,1]$, both $\mu^*$ and $\tau^*$ are constant, reducing the equation to the trivial equality
$2[\mu^*(\theta)-y_A(\theta)]y_A'(\theta)
=2[\mu^*(\theta)-y_A(\theta)]y_A'(\theta)$.
Since $\mu^*$ and $\tau^*$ are continuous at the cutoffs $\underline\theta$ and $\bar\theta$, $(\mathrm{IC}_2)$ holds on the entire state space. Hence, the mechanism $m^*$ is IC.

Now we continue by proving that $V(m^*,F^*)\leq V(m^*,F)$ for all $F\in\mathcal F(\Theta)$. The preceding bounds give $0<d(\underline\theta)/b(\underline\theta)\leq1$, so the proposed $F^*$ is a CDF supported on
$[\underline\theta,\bar\theta]\cup\{1\}$. For Nature's best response, it suffices to show that
\[
    U_{P,m^*}(\theta')\leq U_{P,m^*}(\theta)
    \qquad\text{for all }(\theta',\theta)
    \in\operatorname{supp}(F^*)\times\Theta.
\]
Note that for all $\theta'\in[\underline\theta,\bar\theta]$,
we have
\[
    U_{P,m^*}(\theta')
    =-d(\theta')^2-d(\underline\theta)^2+d(\theta')^2
    =-d(\underline\theta)^2,
\]
which is clearly independent of $\theta'$. Further,
$\mu^*(\bar\theta)=[y_P(\bar\theta)+y_P(1)]/2$ and the
constant variance on $[\bar\theta,1]$ imply
\[
    U_{P,m^*}(\theta')=U_{P,m^*}(1)
    \qquad\text{for all }\theta'\in[\underline\theta,\bar\theta].
\]
For all $\theta\in(\bar\theta,1)$, utility is weakly higher: the mean is the midpoint of the two endpoint ideal actions, while $y_P(\theta)$ lies between them.

If $\underline\theta=0$, we are done. What remains to be shown is that utility on $[0,\underline\theta)$ is weakly higher than on $[\underline\theta,\bar\theta]$. By the cutoff definition and the preceding comparison when the bias is nonincreasing, $\underline\theta>0$ implies that the bias is nondecreasing. Further, we have $\mu^*(\underline\theta)=y_A(\underline\theta)$
and $\tau^*(\underline\theta)=0$. Taken together, this yields
\[
    U_{P,m^*}(\underline\theta)=-b(\underline\theta)^2
    \leq-b(\theta)^2=U_{P,m^*}(\theta)
    \qquad\text{for all }\theta\in[0,\underline\theta),
\]
which completes the proof of $V(m^*,F^*)\leq V(m^*,F)$.
In particular, $V(m^*,F^*)=-d(\underline\theta)^2$.

Secondly, we will show that $V(m^*,F^*)\geq V(m,F^*)$ for all $m\in\mathcal M$. For this, we use Lemma \ref{lem:forward_bias_lemma}. First, we will show that the forward bias $S_{F^*}(\theta)$ is zero on $[\underline\theta,\bar\theta]$ and nonnegative
on $[\underline\theta,1]$. Clearly, it is nonnegative for $\theta\in(\bar\theta,1)$, since $y_A(\theta)>y_P(1)$; at $1$, the integral is zero.

We will first show that the forward bias is constant on
$[\underline\theta,\bar\theta]$. On the interior, differentiating the proposed CDF gives
\[
    (F^*)'(\theta)
    =\frac{y_A'(\theta)}{b(\theta)}[1-F^*(\theta)].
\]
Differentiating the forward bias hence gives
\[
    S_{F^*}'(\theta)
    =y_A'(\theta)[1-F^*(\theta)]
      -b(\theta)(F^*)'(\theta)=0.
\]
By continuity from within the region, this yields a constant forward bias on $[\underline\theta,\bar\theta]$. Next,
$S_{F^*}(\bar\theta)=0$ since $y_A(\bar\theta)=y_P(1)$ and $F^*$ has no probability on $(\bar\theta,1)$. Thus the forward bias is zero throughout the calibrated region, also when
$\bar\theta=1$.

Using the notation of Lemma \ref{lem:forward_bias_lemma}, for any $m\in\mathcal M$ we have
\begin{align*}
    V(m,F^*)
    &=-\bigl(\mathbb E_{F^*}[y_P(\theta)]
                 -\mu_m(\underline\theta)\bigr)^2
          -\tau_m(\underline\theta)
          -\operatorname{Var}_{F^*}[y_P(\theta)]\\
    &\quad-2\int_{[\underline\theta,1]}
          S_{F^*}(\theta)\,d\mu_m^L(\theta)\\
    &\quad-2\sum_{\theta\in\Gamma}
          b(\theta)[F^*(\theta)-F^*(\theta^-)]
          [\mu_m(\theta)-\mu_m(\theta^-)]\\
    &\leq-\operatorname{Var}_{F^*}[y_P(\theta)].
\end{align*}
The integral contribution is nonpositive since $S_{F^*}\geq0$ and $\mu_m^L$ is nondecreasing; every jump contribution is nonpositive since the bias and the jumps of both $F^*$ and $\mu_m$ are nonnegative.
To show that $m^*$ attains this bound, note that zero forward bias at $\underline\theta$ and the definition of $F^*$ imply
\begin{align*}
    \mathbb E_{F^*}[y_P(\theta)]
    &=F^*(\underline\theta)y_P(\underline\theta)
      +[1-F^*(\underline\theta)]y_A(\underline\theta)\\
    &=y_P(\underline\theta)+d(\underline\theta)
      =\mu^*(\underline\theta).
\end{align*}
Hence $m^*$ attains the bound, as $\mu^*$ is continuous and constant on $[\bar\theta,1]$, $\tau^*(\underline\theta)=0$, and $\mu^*(\underline\theta)=\mathbb E_{F^*}[y_P(\theta)]$.
This proves the second saddle-point inequality.
\end{proof}

\subsection{Proof of Theorem \ref{thm:crossing-bias-saddle}}
\begin{proof}
As before, we prove the mechanism $\mu^*$ is robustly optimal via the following three steps. First, $\mu^*$ is incentive compatible. Second, $\mu^*$ guarantees the principal's loss is no larger than $\bar{w}$. Last, $F^*$ guarantees the principal's loss is no lower than $\bar{w}$. 
\paragraph{Incentive compatibility of $\mu^*$}
We first verify $\mu^*$ is nondecreasing. $\mu^*_1$ and $\mu_2^*$ are weakly increasing, which follows from their incentive-compatibility. Thus, monotonicity of $\mu^*$ boils down to continuity of $\mu^*$ at $\theta_{\times}$. If the bias is increasing, we prove continuity by showing that the accommodating region in the region with positive bias is nonempty and the nonemptiness of the accommodating region for the region with a negative bias follows by reflection. 

We characterize the regional saddle point for the positive region accordingly. As before $\bar{\theta} = y_A^{-1}(y_P(1))$, but now 
\[
\underline{\theta}
:=
\inf\left\{
\theta\in(\theta_\times,\bar{\theta}]:
d(\theta)\le b(\theta)
\right\}.
\]
We will prove $\underline{\theta}>\theta_{\times}$. First, we show whenever $d(\theta)-b(\theta)\leq 0$, then $d(\theta')-b(\theta')\leq0$ for all $\theta'\in(\theta,1]$. We will need this observation to establish the contradiction. We have established $(d-b)' = \frac{y_P'}{b}(d-b)-b'$. Thus, with increasing bias, whenever $d\leq b$, $(d-b)$ is decreasing. Moreover, $d$ is decreasing whenever $d\leq b$ since 
\[
d'(\theta)=y_P'(\theta)
\left(\frac{d(\theta)}{b(\theta)}-1\right)\le0.
\]
Now suppose $\underline{\theta} = \theta_{\times}$. By the definition of $\underline{\theta}$, there exist points $\theta_n\downarrow\underline{\theta}$ such that $d(\theta_n)\le b(\theta_n)$. The preceding argument suggests
\[
0<
\frac{b(\bar{\theta})}{2}
=d(\bar{\theta})
\le d(\theta_n)
\le b(\theta_n)
\to b(\underline{\theta})=0,
\]
The second inequality follows from the monotonicity of $d$ over $[\underline{\theta},\bar{\theta}]$. This is a contradiction. We have established that the accommodating region on the positive bias side is nonempty. Therefore, 
\[
\mu^*_1(\theta_\times) = \mu^*_2(\theta_{\times})=y_A(\theta_{\times})=y_P(\theta_{\times})
\]
The monotonicity of $\mu^*$ is established for increasing bias.

Now we prove the monotonicity of $\mu^*$ when the bias is decreasing. Similarly, the monotonicity of $\mu^*$ boils down to the continuity of $\mu^*$ at $\theta_\times$. We prove $\mu^*_1(\theta_\times) = y_A(\theta_\times)$; the proof for $\mu^*_2(\theta_\times) = y_A(\theta_\times)$ follows from the reflection argument. Now within the positive bias region $[0,\theta_\times]$, the pooling cut-off $\bar{\theta}$ and distance function are defined as 
\[
\bar{\theta}:=\theta_\times,\quad d(\theta)=
\int_\theta^{\theta_\times}
y_P'(t)\exp\!\left(
-\int_\theta^t\frac{y_P'(s)}{b(s)}\,ds
\right)\,dt.
\]
Thus, $\mu^*_1(\theta_\times)=y_A(\theta_\times)$ since $\lim_{\theta\to\theta_\times}d(\theta)=0$. 

Now we prove $IC_2$. Still, we first consider the case of increasing bias, $b'\geq 0$. To show $IC_2$, it suffices to consider $\theta_1,\theta_2\in\Theta$ with
$\theta_1\leq\theta_\times\leq\theta_2$, since $(IC_2)$ holds for pairs in the same region by incentive compatibility of the regional mechanisms $m_1^*$ and $m_2^*$. We must show that
\begin{equation}\label{p2_env}
U_{A,m^*}(\theta_2)-U_{A,m^*}(\theta_1)
=
2\int_{\theta_1}^{\theta_2}
[\mu^*(s)-y_A(s)]\,y_A'(s)\,ds.
\end{equation}
Since the regional mechanisms satisfy $(IC_2)$, we have
\begin{align*}
U_{A,m_2^*}(\theta_2)-U_{A,m_2^*}(\theta_\times)
&=
2\int_{\theta_\times}^{\theta_2}
[\mu_2^*(s)-y_A(s)]\,y_A'(s)\,ds, \text{ and } \\
U_{A,m_1^*}(\theta_\times)-U_{A,m_1^*}(\theta_1)
&=
2\int_{\theta_1}^{\theta_\times}
[\mu_1^*(s)-y_A(s)]\,y_A'(s)\,ds.
\end{align*}
Adding these equalities and using the construction of $\mu_{m^*}$, we can replace $\mu_i^*$ with $\mu^*$ in the corresponding regions:
\begin{equation}\label{p2_env_confirm}
U_{A,m_2^*}(\theta_2)-U_{A,m_2^*}(\theta_\times)
+U_{A,m_1^*}(\theta_\times)-U_{A,m_1^*}(\theta_1) =
2\int_{\theta_1}^{\theta_2}
[\mu^*(s)-y_A(s)]\,y_A'(s)\,ds.
\end{equation}
Moreover, by the regional construction in Theorem \ref{thm:positive-bias-saddle} and its reflection, both regional mechanisms accommodate the agent at $\theta_\times$. Hence,
\[
\mu_1^*(\theta_\times)
=
\mu_2^*(\theta_\times)
=
y_A(\theta_\times)
=
y_P(\theta_\times), \text{ and } 
\tau_1^*(\theta_\times)
=
\tau_2^*(\theta_\times)
=
0.\]
Thus, $U_{A,m_1^*}(\theta_\times)=U_{A,m_2^*}(\theta_\times)=0$. Together with the construction of $m^*$, this also implies $
\mu^*(\theta)=\mu_i^*(\theta)$ and $
\tau^*(\theta)=\tau_i^*(\theta)$ for all $\theta\in\Theta_i,i=1,2$.
Consequently,
\[
U_{A,m_2^*}(\theta_2)=U_{A,m^*}(\theta_2)
\quad\text{and}\quad
U_{A,m_1^*}(\theta_1)=U_{A,m^*}(\theta_1).
\]
The crossing-point utility terms in \eqref{p2_env_confirm} therefore cancel, yielding
\eqref{p2_env}. Reversing the endpoints gives the same identity for the opposite ordering. Thus $(IC_2)$ holds globally.

Now we verify $IC_2$ for decreasing bias. Without loss of generality, assume that $w_2\geq w_1$. By the regional construction and its reflection, $\mu_{m_i^*}(\theta_\times)=y_A(\theta_\times)=y_P(\theta_\times)$, and $\tau_{m_i^*}(\theta_\times)=w_i, i=1,2$ as $\theta_\times$ lies in the calibrated randomization region on each side. Hence, $U_{A,m_i^*}(\theta_\times)=-w_i$. In particular, after the variance adjustment the two regional lotteries have an identical mean and variance at $\theta_\times$: $\tau_{m_1^*}(\theta_\times)+ w_2- w_1 =\tau_{m_2^*}(\theta_\times)$. The construction of $m^*$ gives
\[
U_{A,m^*}(\theta)=
\begin{cases}
U_{A,m_1^*}(\theta)-( w_2- w_1),
    & \theta\in\Theta_1,\\
U_{A,m_2^*}(\theta),
    & \theta\in\Theta_2.
\end{cases}
\]
Thus, $U_{A,m_1^*}(\theta_\times)-(w_2-w_1)=U_{A,m_2^*}(\theta_\times)=-w_2$. Since the adjustments are constant within each region, $(IC_2)$ continues to hold for pairs in the same region. Now take $\theta_1\leq\theta_\times\leq\theta_2$. Substituting the preceding utility identities into \eqref{p2_env_confirm} yields
\begin{align*}
&U_{A,m^*}(\theta_2)
-U_{A,m_2^*}(\theta_\times)
+U_{A,m_1^*}(\theta_\times)
-U_{A,m^*}(\theta_1)-(w_2-w_1)\\
&\qquad =
2\int_{\theta_1}^{\theta_2}
[\mu_{m^*}(s)-y_A(s)]\,y_A'(s)\,ds.
\end{align*}
Since
\[
-U_{A,m_2^*}(\theta_\times)
+U_{A,m_1^*}(\theta_\times)-(w_2-w_1)
=w_2-w_1-(w_2-w_1)=0,
\]
equation \eqref{p2_env} follows, verifying condition $IC_2$. $\tau^*(\theta)\ge0$ for all $\theta \in \Theta$, which follows from the fact that the variance $\tau^*_i$ is nonnegative for each $i=1,2$.
\paragraph{Nature's guarantee}
To prove Nature's guarantee, we need to show
\[
V(m,F^*)\leq V(m^*,F^*) = -\bar{w},\quad\forall m\in\mathcal{M}
\]
Assume without loss of generality that $\bar{w}=w_2$; then by definition $F^*=F_2^*$. Now, restrict an arbitrary competing mechanism to $\Theta_2$. If $m$ is globally incentive compatible, its restriction
\[m|_{\Theta_2}:\Theta_2 \rightarrow \Delta(\mathcal{A})\]
is incentive-compatible for the regional problem. By Theorem \ref{thm:positive-bias-saddle}, we have
\[
V(m,F^*)=V(m|_{\Theta_2},F_2^*)\leq V(m_2^*,F_2^*)=-\bar w=V(m^*,F^*).
\] 

\paragraph{The principal's guarantee}
To verify the principal's guarantee, we need to prove 
\[
V(m^*,F)\geq V(m^*,F^*),\quad\forall F\in\mathcal{F}(\Theta)
\]
By Theorem \ref{thm:positive-bias-saddle}, each regional mechanism guarantees utility at least $-w_i$ against every distribution supported on $\Theta_i$. In particular, applying this guarantee to the Dirac distribution at any $\theta\in\Theta_i$ gives
\[U_{P,m_i^*}(\theta)
=V(m_i^*,\delta_\theta)
\ge -w_i.\]
If the bias is increasing, the combined mechanism preserves the regional means and variances. Therefore, for every $\theta\in\Theta_i$,
\[
U_{P,m^*}(\theta)
=U_{P,m_i^*}(\theta)
\ge-w_i
\ge-\bar w.
\]
If the bias is decreasing, the combined mechanism preserves the regional mean and increases the variance by $\bar w-w_i$. Hence, for every $\theta\in\Theta_i$,
\[U_{P,m^*}(\theta)= U_{P,m_i^*}(\theta)-(\bar w-w_i)\ge-w_i-(\bar w-w_i)=-\bar w.\]
Thus, in either case,
\[U_{P,m^*}(\theta)\ge-\bar{w}
\ \text{ for every }\theta\in\Theta.\]
Integrating against any probability distribution $F$ yields
\[
V(m^*,F)
=\int_\Theta U_{P,m^*}(\theta)\,dF(\theta)
\ge-\bar w=V(m^*,F^*).
\]
\end{proof}

\subsection{Proof of Proposition~\ref{prop:multidimensional-saddle} under Assumption \ref{assumption1}}
\begin{proof}

Now, let us briefly establish that the pair $(m_i^*,F_i^*)$ for $i \in \{1,...,n\}$ forms a saddle-point in each dimension $i$. To see this, note that from equation \eqref{prop1_util_rewritten}, we obtain an expected payoff-equivalence identity: 
\begin{multline*}
    -(q_i)^2 \int_{\Theta_i} \int_{\mathcal{A}_i} \left( \frac{a_i-r_i}{q_i} - y_{A,1}(\psi_i(\theta_i)) \right)^2 \, dm_i^*(a_i|\theta_i) \, dF_i^*(\theta_i)  = \\
    -(q_i)^2 \int_{\Theta_1} \int_{\mathcal{A}_1} \left(a_1 - y_{A,1}(\theta_1) \right)^2 \, dm_1^*(a_1|\theta_1) \, dF_1^*(\theta_1).
\end{multline*} 
As $m_1^*$ is $IC$, $IC$ for $m_i^*$ follows immediately. Note that in dimension $i=1$, we have $q_1=1,r_1=0$, and $\psi_1$ is the identity mapping. 
By utilizing the same rewriting of utilities (which clearly works identically for the principal's utility and for any pair $(m_i,F_i)$), we can establish easily that $(m_i^*,F_i^*)$ being a saddle-point follows immediately from $(m_1^*,F_1^*)$ forming a saddle-point in the reference dimension. 

\noindent
Next, we will prove that our candidate $M^*$ is incentive-compatible: Let's denote the arbitrary state vector by $\theta \in \Theta$ and its report by $\hat{\theta} \in \Theta$. From equation \eqref{prop1_util_rewritten}:
\[ U_{A,i}^{m_i^*}(\theta_i,\hat{\theta}_i)=q_i^2 \; U_{A,1}^{m_1^*}(\psi_i(\theta_i),\psi_i(\hat\theta_i)) \]
$m_1^*$ being incentive-compatible in one dimension yields that for each pair $(\theta_i,\hat{\theta}_i)$,
\begin{equation}\label{IC_1D}
    U_{A,1}^{m_1^*}(\psi_i(\theta_i),\psi_i(\theta_i))
   \ge
U_{A,1}^{m_1^*}(\psi_i(\theta_i),\psi_i(\hat\theta_i))
\end{equation}
Since $M^*$ is by construction just the product of intra-dimensional lotteries $m_i^*$, we can sum equation \eqref{IC_1D} over all n dimensions and weight each dimension by $q_i^2$, which directly yields IC:
\begin{multline*}
     U_A^{M^*}(\theta,\theta) =\sum_{i=1}^n q_i^2 \, U_{A,1}^{m_1^*}(\psi_i(\theta_i),\psi_i(\theta_i)) \ge  \\ 
     \sum_{i=1}^n q_i^2 \, U_{A,1}^{m_1^*}(\psi_i(\theta_i),\psi_i(\hat\theta_i)) =U_A^{M^*}(\theta,\hat{\theta}) \quad  \forall \theta,\hat{\theta} \in \Theta
\end{multline*} 

\noindent
Next, we will show that $F^*$ is a best-response to $M^*$, so $V(M^*,F)\ge V(M^*,F^*)$ for all $F \in \mathcal{F}(\Theta)$. Let $F$ be an arbitrary joint CDF on $\Theta$, with marginals $F_1,...,F_n$. 
As the candidate $M^*$ maps dimensions separately, the action lottery in dimension $i$ depends solely on the respective state-report $\hat{\theta}_i$; we can infer: 
\[ V(M^*,F)
   =\sum_{i=1}^n V_{i}(m_i^*,F_i) \]
As we have established $(m_i^*,F_i^*)$ to be a saddle point in each dimension $i$, we get for all $F \in \mathcal{F}(\Theta)$: 
\[V(M^*,F)
   =\sum_{i=1}^n V_{i}(m_i^*,F_i) \ge \sum_{i=1}^n V_{i}(m_i^*,F_i^*)=V(M^*,F^*)  \]

\noindent
Next, we will show that $M^*$ is a best-response to $F^*$, so $V(M^*,F^*)\ge V(M,F^*)$ for all $M \in \mathcal{M}$.
Take any multidimensional IC mechanism
\[M:\Theta\to\Delta(\mathcal{A})\]
Write any point on the reparametrized diagonal as $\theta_{d}=(t, \psi_2^{-1}(t),...,\psi_n^{-1}(t))$ (for any $t \in \Theta_1$) and the agent's diagonal report given this diagonal realization as $\hat{\theta}_{d}=(\hat{t}, \psi_2^{-1}(\hat{t}),...,\psi_n^{-1}(\hat{t}))$ (for any $\hat{t} \in \Theta_1$). Further, let $\Theta_{d,\psi}$ denote the subset of the state space that only contains reparametrized diagonal arguments under the vector $\boldsymbol{\psi}$.   

For the mechanism $M$, denote its $i$-th marginal allocation lottery by $M_i(\hat{\theta})$ following the report $\hat{\theta}$.
Now construct the following auxiliary one-dimensional stochastic mechanism $\phi$ induced by mechanism $M$: 
\begin{equation*}
\phi_{M}(\hat{t}):= \sum_{i=1}^n \frac{q_i^2}{\sum_{k=1}^n q_k^2} \int_{\mathcal{A}_i} \delta_{\frac{a_i-r_i}{q_i}}  \,  dM_i(a_i|\hat{\theta}_d)    
\end{equation*}
This auxiliary mechanism $\phi_{M}(\hat{t})$ chooses one dimension $i \in \{1,..,n\}$ with probability \linebreak $ ({q_i^2})/({\sum_{k=1}^n q_k^2})$, takes the outcome under the original mechanism $M$ in this dimension $i$, and implements its rescaled/transformed version. Hence, it reflects a mixture over the rescaled stochastic allocation of the mechanism $M$ after the report $\hat{\theta}_d$.

Next, we will show that this auxiliary one-dimensional mechanism $\phi_{M}$ is one-dimensional IC. We start by applying multi-dimensional IC, which yields for every multi-dimensional mechanism $M \in \mathcal{M}$: 
\begin{multline*}
\sum_{i=1}^n q_i^2
 \int_{\mathcal{A}_i} u_{A,1}\left(\frac{a_i-r_i}{q_i},t\right)\, d M_i(a_i|\theta_d)
\ge \\
\sum_{i=1}^n q_i^2
 \int_{\mathcal{A}_i} u_{A,1}\left(\frac{a_i-r_i}{q_i},t\right)\, d M_i(a_i|\hat{\theta}_d) \quad  \forall \theta_d,\hat{\theta}_d \in \Theta_{d,\psi} 
\end{multline*}
Dividing both sides by ${\sum_{k=1}^n q_k^2}$ gives:
\[
U_{A,1}^{\phi_M}(t,t)
   \ge U_{A,1}^{\phi_M}(t,\hat{t})  \quad  \forall t,\hat{t} \in \Theta_{1}
\]
Therefore, $\phi_{M}$ is $IC$. 
Since the auxiliary mechanism $\phi_{M}$ just gives a mixture of the rescaled marginal stochastic allocations of the mechanism $M$ under distribution $F^*$, we have for every $M \in \mathcal{M}$: 
\begin{equation*}
V(M,F^*)
   = \sum_{i=1}^n q_i^2 \, V_{1}(\phi_{M},F_1^*)
\end{equation*}
Utilizing the second inequality from \eqref{1d_saddle}, we get: 
\begin{equation*}
V(M,F^*)
  = \sum_{i=1}^n q_i^2 \, V_{1}(\phi_{M},F_1^*)
  \le \sum_{i=1}^n q_i^2 \, V_{1}(m_1^*,F_1^*)
  =V(M^*,F^*)
\end{equation*}
Consequently, we have: 
\begin{equation*}
V(M^*,F)
\ge V(M^*,F^*)
\ge V(M,F^*) \quad
\forall M \in \mathcal{M}, F \in \mathcal{F}(\Theta)
\end{equation*}
Therefore,  $(M^*,F^*)$ is a saddle point. 
\end{proof}

\section{Corollary for the Negative Bias}
\label{app:negative-bias}
We now extend our result from the positive bias to negative bias. Assume that $b(\theta)\leq0$. We reflect both states and actions. Let 
\[R(\theta):=1-\theta, \qquad
\kappa:=y_P(0)+y_P(1), \qquad a_R:=\kappa-a\]
The reflected ideal actions and bias are given by
\begin{equation*}
    y_{i,R}(\theta)=\kappa-y_i(1-\theta),
    \qquad
    b_R(\theta)=-b(1-\theta)\geq0.
    \label{eq:reflected-preferences}
\end{equation*}

Hence, $y_A\leq y_P$ in the non-reflected environment is equivalent to $y_{A,R}\geq y_{P,R}$ in the reflected one. By setting $\kappa=y_P(0)+y_P(1)$, the reflection preserves the principal's ideal actions at the endpoints, so $y_{P,R}(0)=y_P(0)$ and $y_{P,R}(1)=y_P(1)$. Let $U^R$ and $V^R$ denote the reflections of the respective utility functions. 
Further, we have payoff equivalence between the original and the reflected delegation problems as for every pair $(\theta,\hat\theta)$: 
\begin{align*}
U_{j,m_R}^R(R(\theta),R(\hat\theta)) & = \int_{\mathcal{A}_R} -\left[ a_R-y_{j,R}(R(\theta)) \right]^2 \, dm_R(a_R|R(\hat{\theta})) \\ & = \int_\mathcal{A} -\left[a-y_j(\theta)\right]^2 \, dm(a|\hat{\theta}) =U_{j,m}(\theta,\hat\theta)
\end{align*}
A mechanism's reflection $m_R=(\mu_R,\tau_R)$ will be defined by
\[\mu_R(\theta):=\kappa-\mu(1-\theta),
\quad \quad
\tau_R(\theta):=\tau(1-\theta).
\]
For any CDF, let $F_R$ denote its reflection under $R(\theta)$; so, mass at $\theta$ is mapped to $1-\theta$; hence $F_R(\theta)=1-F^-(1-\theta)$. Because reflection is an involution, reflecting $m_R$ and $F_R$ again yields $m$ and $F$, respectively.

\begin{corollary}
    Suppose $y_A(\theta)\le y_P(\theta)$ for all $\theta\in\Theta$ and let $(m^*,F^*)$ be a saddle point constructed in Theorem~\ref{thm:positive-bias-saddle} for the reflected preferences $(y_{A,R},y_{P,R})$. Then $((m^*)_R,(F^*)_R)$ forms a saddle point for $(y_{A},y_{P})$.
\end{corollary}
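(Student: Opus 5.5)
The plan is to transport the saddle-point inequalities through the reflection map, using the payoff-equivalence identity displayed just before the corollary. The reflection $\rho$ acts on mechanisms by $m\mapsto m_R$, with $m_R(\hat\theta)$ the pushforward of $m(1-\hat\theta)$ under $a\mapsto\kappa-a$. It acts on CDFs by $F\mapsto F_R$, the pushforward under $\theta\mapsto1-\theta$. I would first record that $\rho$ is an involution on both $\tilde{\mathcal M}$ and $\mathcal F(\Theta)$. For the action space, I would assume that $\kappa-\mathcal A$ is again an admissible, sufficiently large action interval; enlarging $\mathcal A$ symmetrically if needed, this is harmless given the model's standing assumption.

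\textbf{Step 1 (incentive compatibility is preserved).} I would apply the identity $U^R_{A,m_R}(R(\theta),R(\hat\theta))=U_{A,m}(\theta,\hat\theta)$ for all $\theta,\hat\theta$. Because $R$ is a bijection of $\Theta$, $m$ satisfies \eqref{eq:IC} for $(y_A,y_P)$ if and only if $m_R$ satisfies \eqref{eq:IC} for $(y_{A,R},y_{P,R})$. Hence $\rho$ restricts to a bijection between the two sets of incentive-compatible mechanisms, with inverse $\rho$ itself. The tie-breaking convention in favour of truthful reporting also transfers, since the ordering of reports is preserved state by state. Alternatively, one can verify Lemma~\ref{lem:ic-characterization} directly: $\mu_R$ is nondecreasing exactly when $\mu$ is, $\tau_R\ge0$, and (IC$_2$) transforms by the change of variables $t\mapsto1-t$ together with $y_{A,R}'(t)=y_A'(1-t)$.

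\textbf{Step 2 (payoff equivalence).} For any IC $m$ and any $F$, I would show $V^R(m_R,F_R)=V(m,F)$. The principal's case of the same identity gives $U^R_{P,m_R}(R(\theta))=U_{P,m}(\theta)$. Integrating against $F_R$, which is the pushforward of $F$ under $R$, then yields the claim by the change-of-variables formula for pushforward measures. The one point needing care is that $F_R(\theta)=1-F(((1-\theta))^-)$ is indeed the CDF of that pushforward. I would check this on half-open intervals and extend by uniqueness of measures, so that atoms are handled correctly.

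\textbf{Step 3 (transfer of the saddle point).} By hypothesis, $(m^*,F^*)$ satisfies $V^R(m^*,G)\ge V^R(m^*,F^*)\ge V^R(n,F^*)$ for every $G$ and every $n$ that is IC in the reflected problem. Now take an arbitrary $F$ and an IC $m$ in the original problem. I would set $G=F_R$ and $n=m_R$; the latter is IC by Step 1. Using the involution property $((m^*)_R)_R=m^*$ and $((F^*)_R)_R=F^*$, Step 2 rewrites the three terms as $V((m^*)_R,F)$, $V((m^*)_R,(F^*)_R)$ and $V(m,(F^*)_R)$. This gives exactly the saddle-point inequalities for $((m^*)_R,(F^*)_R)$ in the original problem. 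Here $(m^*)_R$ is IC for the original preferences by Step 1 applied to the reflected problem.

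The main obstacle is mostly bookkeeping. The real content sits in Step 2, namely matching the CDF convention $F_R=1-F^-(1-\cdot)$ with pushforward measures, and in making sure the reflected action space remains admissible. Everything else follows formally from the pointwise payoff equivalence.
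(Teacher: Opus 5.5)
Your proposal is correct and follows essentially the same route as the paper. Pointwise payoff equivalence under the joint reflection gives two facts: a mechanism is incentive compatible exactly when its reflection is, and $V(m,F)=V^R(m_R,F_R)$. The involution property then carries both saddle-point inequalities over to the original problem. Your added checks fill in details the paper leaves implicit: that $F_R(\theta)=1-F^-(1-\theta)$ is the CDF of the pushforward measure, and that the reflected action space is admissible.
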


Pointwise payoff equivalence implies that (i) $m$ is incentive-compatible if and only if $m_R$ is incentive-compatible and (ii) for any distribution $F$, $V(m,F)=V^R(m_R,F_R)$. Since the reflection $R$ is an involution, the two best-response inequalities for $(m^*,F^*)$ imply the corresponding ones for $((m^*)_R,(F^*)_R)$. \par

\end{document}